\newcommand{\init}{\iota}
\newcommand{\comp}{{\it comp}}
\newcommand{\zug}[1]{\langle #1 \rangle}
\newcommand{\A}{{\cal A}}
\newcommand{\B}{{\cal B}}
\newcommand{\T}{{\cal T}}
\newcommand{\RL}{{\cal R}_L}
\newcommand{\res}[1]{\langle #1  \rangle}
\newcommand{\DFA}{\mbox{\rm DFA}\xspace}
\newcommand{\DFAs}{\mbox{\rm DFAs}\xspace}
\DeclareMathOperator{\indx}{index}
\DeclareMathOperator{\DFAreal}{DFA}
\DeclareMathOperator{\violate}{Violate}
\DeclareMathOperator{\violatesep}{Violate}
\DeclareMathOperator{\NoDFAreal}{NoDFA}
\DeclareMathOperator{\DFAsep}{SepDFA}
\DeclareMathOperator{\NoDFAsep}{NoSepDFA}
\begin{document}

\title{Certifying DFA Bounds\\ for Recognition and Separation\thanks{This is the full version of an article with the same title that appears in the ATVA 2021 conference proceedings. The final authenticated publication is available online at \url{https://doi.org/[insert DOI]}. Orna Kupferman is supported in part by the Israel Science Foundation, grant No. 2357/19. Salomon Sickert is supported by the Deutsche Forschungsgemeinschaft (DFG) under project number 436811179.}}

\author{Orna Kupferman \and
Nir Lavee \and
Salomon Sickert}
\institute{School of Computer Science and Engineering, The Hebrew University, Israel.
\email{orna@cs.huji.ac.il}, \email{nir.lavee@mail.huji.ac.il}, \email{salomon.sickert@mail.huji.ac.il}}
\authorrunning{Orna Kupferman, Nir Lavee, and Salomon Sickert}
\maketitle              % typeset the header of the contribution

\begin{abstract}
The automation of decision procedures makes certification essential. We suggest to use determinacy of turn-based two-player games with regular winning conditions in order to generate certificates for the number of states that  a deterministic finite automaton (DFA) needs in order to recognize a given language. Given a language $L$ and a bound $k$, recognizability of $L$ by a DFA with $k$ states is reduced to a game between Prover and Refuter. The interaction along the game then serves as a certificate. Certificates generated by Prover are minimal DFAs. Certificates generated by Refuter are faulty attempts to define the required DFA.
We compare the length of offline certificates, which are generated with no interaction between Prover and Refuter, and online certificates, which are based on such an interaction, and are thus shorter. We show that our approach is useful also for certification of separability of regular languages by a DFA of a given size. Unlike DFA minimization, which can be solved in polynomial time, separation is NP-complete, and thus the certification approach is essential. In addition, we prove NP-completeness of a strict version of separation.
%\keywords{Automata on finite words \and Expressive power \and Games.}
\end{abstract}

\section{Introduction}

{\em Deterministic finite automata\/} (DFAs) are among the most studied computation models in theoretical computer science.
In addition to serving as an abstract mathematical concept, they are often the basis for specification and implementation of finite-state hardware and software designs \cite{Aut10}.
In particular, the theory of DFAs applies also to deterministic automata of infinite words that recognize {\em safety\/} languages, which are characterized by finite forbidden behaviors \cite{AS87,KV01d}.

A fundamental problem about DFAs is their {\em minimization}:
For $k \geq 1$, we say that a language $L \subseteq \Sigma^*$ is {\em $k$-DFA-recognizable\/} if there is a $k$-DFA, namely a DFA with at most $k$ states, that recognizes $L$. In the minimization problem, we are given a DFA $\A$ and a bound $k \geq 1$, and decide whether $L(\A)$, namely the language of $\A$, is $k$-DFA-recognizable.
DFAs enjoy a clean (and beautiful) theory of canonicity and minimization, based on a
{\em right-congruence} relation:  A language $L \subseteq \Sigma^*$ induces a relation $\sim_L \subseteq \Sigma^* \times \Sigma^*$, where for every two words $h_1,h_2 \in \Sigma^*$, we have that $h_1 \sim_L h_2$ iff for all words $t \in \Sigma^*$, we have that $h_1 \cdot t \in L$ iff $h_2 \cdot t \in L$.
By the Myhill-Nerode Theorem \cite{Myh57,Ner58}, the language $L$ is $k$-DFA-recognizable iff the number of equivalence classes of $\sim_L$ is at most $k$.
Moreover, a given DFA $\A$ can be minimized in polynomial time, by a fixed-point algorithm that merges states associated with the same equivalence class of $\sim_{L(\A)}$.

Another fundamental problem about DFAs is {\em separation}:  Given DFAs $\A_1$ and $\A_2$, and a bound $k \geq 1$, decide whether there is a $k$-DFA $\A$ that {\em separates} $\A_1$ and $\A_2$. That is, $L(\A_1) \subseteq L(\A)$ and $L(\A) \cap L(\A_2) = \emptyset$.  Finding a separator for $\A_1$ and $\A_2$ is closely related to the {\em DFA identification\/} problem. There, given sets $S_1,S_2 \subseteq \Sigma^*$ of positive and negative words, and a bound $k \geq 1$, we seek a $k$-DFA that accepts all words from $S_1$ and no word from $S_2$. DFA identification is NP-complete \cite{Gol78}, with numerous heuristics and applications \cite{TB73,HV10}.
NP-hardness of DFA separation can be obtained by a reduction from DFA identification, but for DFA separation with additional constraints, in particular strict separation, NP-hardness is open \cite{GGS17}.
Studies of separation include a search for regular separators of general languages \cite{CLMMKS18}, as well as separation of regular languages by weaker classes of languages, e.g., FO-definable languages \cite{PZ16} or piecewise testable languages \cite{CMM13}.

Let us return to the problem of DFA minimization, and assume we want to {\em certify} the minimality of a given DFA. That is, we are given a DFA $\A$ and a bound $k \geq 1$, and we seek a proof that $L(\A)$ is not $k$-DFA-recognizable.
The need to accompany results of decision procedures by a certificate is not new, and includes certification of a ``correct'' decision of a model checker   \cite{KV05d,ACOW18}, reachability certificates in complex multi-agent systems \cite{AL20}, and explainable reactive synthesis \cite{BFT20}.
Certifying that $L(\A)$ is not $k$-DFA-recognizable, we can point to $k+1$ words $h_1,\ldots, h_{k+1} \in \Sigma^*$ that belong to different equivalence classes of the relation $\sim_{L(\A)}$, along with an explanation why they indeed belong to different classes, namely words $t_{i,j} \in \Sigma^*$, for all $1 \leq i \neq j \leq k+1$, such that $h_i \cdot t_{i,j}$ and $h_j \cdot t_{i,j}$ do not agree on their membership in $L(\A)$.

The above certification process is {\em offline}: Refuter (that is, the entity proving that $L(\A)$ is not $k$-DFA-recognizable) generates and outputs the certificate without an interaction with Prover (that is, the entity claiming that $L(\A)$ is $k$-DFA-recognizable). In this work we describe an {\em interactive certification protocol}:\footnote{Note that while our certification protocol is interactive, the setting is different from that of an interactive proof system in computational complexity theory. In particular, our Prover and Refuter are both finite-state, they have complementary objectives, and no probability is involved.}
Given $\A$ and $k \geq 1$, Refuter and Prover interact, aiming to convince each other about the (non-)existence of a $k$-DFA for $L(\A)$.
Our approach offers two advantages over offline certification. First, the length of the certificate is shorter. Second, the interactive protocol can also be used for efficiently certifying bounds on the size of DFA separators. In addition, we solve the open problem of the complexity of deciding strict separation by a $k$-DFA. We show that it is NP-complete, and so are variants requiring only one side of the separation to be strict.

The underlying idea behind the interactive certification protocol is simple: Consider a language $L \subseteq \Sigma^*$ and a bound $k \geq 1$. We consider a {\em turn-based two-player game\/} between Refuter and Prover. In
each round in the game, Prover provides a letter from a set $[k]=\{1,2,\ldots,k\}$ that describes the state space of a DFA for $L$ that Prover claims to exist, and Refuter responds with a letter in $\Sigma \cup \{\#\}$, for a special reset letter $\# \not \in \Sigma$. Thus, during the interaction, Prover generates a word $y \in [k]^\omega$ and Refuter generates a word $x \in (\Sigma \cup \{\#\})^\omega$. The word $x$ describes an infinite sequence of words in $\Sigma^*$, separated by $\#$'s, and the word $y$ aims to describe runs of a $k$-DFA on the words in the sequence.
Prover wins if the described runs are legal: They all start with the same initial state and follow some transition function, and are consistent with $L$: There is a way to classify the states in $[k]$ to accepting and rejecting such that Prover responds with an accepting state whenever the word generated by Refuter since the last $\#$ is in $L$.  Clearly, if there is a $k$-DFA for $L$, then Prover can win by following its runs. Likewise, a winning strategy for Prover induces a $k$-DFA for $L$. The key idea  behind our contribution is that since the above described game is determined \cite{BL69}, Refuter has a winning strategy iff no $k$-DFA for $L$ exists. Moreover, since the game is regular, this winning strategy induces a finite-state transducer, which we term an {\em $(L,k)$-refuter}, and which generates interactive certificates for $L(\A)$ not being $k$-DFA-recognizable.

Consider a language $L$ with index $N$. Recall that the interaction between Refuter and Prover generates words $x \in (\Sigma \cup \{\#\})^\omega$ and $y \in [k]^\omega$. If $k < N$, Refuter can generate $x$ for which the responses of Prover in $y$ must contain a violation of legality or agreement with $L$. Once a violation is detected, the interaction terminates and it constitutes a certificate: an {\em informative bad prefix\/} \cite{KV01d} of the safety language of interactions in which Prover's responses are legal and agree with $L$. We show that the length of certificates generated by offline refuters is at most $O(k^2 \cdot N)$, whereas interaction reduces the length to $O(k^2 + N)$. We show that both bounds are tight. For separation, we describe a refuter that generates certificates of length at most $O(k^2 \cdot |\Sigma| + k \cdot (N_1 + N_2))$, where $N_1$ and $N_2$ are the indices of the separated languages.

Our interactive certification protocol has similarities with the interaction that takes place in {\em learning\/} of regular languages \cite{Ang87}, (see recent survey in \cite{Fis18}). There, a Learner is tasked to construct a DFA $\A$ for an unknown regular set $L$ by asking a Teacher queries of two types: Membership (``$w \in L$?'') and equivalence  (``$L(\A) = L$?''). In our setting, Refuter also wants to ``learn'' the $k$-DFA for $L$ that Prover claims to possess, but she needs to learn only a fraction of it from Prover -- a fraction that reveals that it does not actually recognize $L$. This is done with a single type of query (``what is the next state?''), which may give Refuter more information than the information gained in the learning setting.

\section{Preliminaries}

\subsection{Automata}
\label{prelim automata}

A {\em deterministic automaton on finite words\/} (\DFA, for short) is $\A=\zug{\Sigma,Q,q_0,\delta,F}$, where $Q$ is a finite set of states, $q_0 \in Q$ is an initial state, $\delta: Q\times \Sigma \rightarrow Q$ is a partial transition function, and $F \subseteq Q$ is a set of final states. We sometimes refer to $\delta$ as a relation $\Delta\subseteq Q\times \Sigma\times Q$, with $\zug{q,\sigma,q'}\in \Delta$ iff $\delta(q,\sigma)=q'$. A run of $\A$ on a word $w=w_1 \cdot w_2 \cdots w_m \in \Sigma^*$ is the sequence of states $q_0,q_1,\ldots,q_m$ such that $q_{i+1} = \delta(q_i,w_{i+1})$ for all $0 \leq i < m$. The run is accepting if $q_m \in F$. A word $w \in \Sigma^*$ is accepted by $\A$ if the run of $\A$ on $w$ is accepting. The language of $\A$, denoted $L(\A)$, is the set of words that $\A$ accepts. We define the {\em size\/} of $\A$, denoted $|\A|$, as the number of states that $\A$ has. For a language $L \subseteq \Sigma^*$, we use ${\it comp}(L)$ to denote the language complementing $L$, thus ${\it comp}(L)=\Sigma^* \setminus L$.

Consider a language $L \subseteq \Sigma^*$. For two finite words $h_1$ and $h_2$, we say that $h_1$ and $h_2$ are {\em right $L$-indistinguishable}, denoted $h_1 \sim_L h_2$, if for every $t \in \Sigma^*$, we have that $h_1 \cdot t \in L$ iff $h_2 \cdot t \in L$. Thus, $\sim_L$ is the Myhill-Nerode right congruence used for minimizing \DFAs. For $h \in \Sigma^*$, let $[h]$ denote the equivalence class of $h$ in $\sim_L$ and let $\res{L}$ denote the set of all equivalence classes.
When $L$ is regular, the set $\res{L}$ is finite and we use $\indx(L)$ to denote $|\res{L}|$. The set $\res{L}$ induces the \emph{residual automaton} of $L$, defined by $\RL=\zug{\Sigma,\res{L}, \Delta_L,[\epsilon],F}$, with $\zug{[h],a,[h \cdot a]} \in \Delta_L$ for all $[h] \in \res{L}$ and $a \in \Sigma$. Also, $F$ contains all classes $[h]$ with $h \in L$. The \DFA $\RL$ is well defined and is the unique minimal \DFA for $L$.

\begin{lemma}
\label{head bound}
Consider a regular language $L$ of index $N$.
	For every $1\le k\le N$, there is a set $H_{k}=\{h_1,\ldots,h_k\}$ of words $h_i \in \Sigma^*$ such that $h_i \not \sim_L h_j$ for all $1 \leq i \neq j \leq k$ and $|h_i|\le k-1$ for all $1 \leq i  \leq k$.
\end{lemma}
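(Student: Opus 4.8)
The plan is to work inside the residual automaton $\RL$, which by the discussion preceding the lemma is the unique minimal \DFA for $L$, has exactly $N = \indx(L)$ states, and is complete, since $\Delta_L$ is defined for every class and every letter. The key structural fact I will use is that every state $[h] \in \res{L}$ is reachable from the initial state $[\epsilon]$, namely by reading $h$: the run of $\RL$ on $h = a_1 \cdots a_m$ visits $[\epsilon], [a_1], [a_1 a_2], \ldots, [h]$. Hence finding $k$ pairwise $L$-distinguishable words of length at most $k-1$ is the same as finding $k$ states of $\RL$ reachable from $[\epsilon]$ within $k-1$ transitions: any state reached by a word $h$ of length $\le k-1$ supplies a representative of that length, and two distinct states correspond to classes $[h_i] \neq [h_j]$, i.e.\ $h_i \not\sim_L h_j$.

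Next I would run a breadth-first search from $[\epsilon]$. For $d \ge 0$, let $S_d \subseteq \res{L}$ be the set of states reachable from $[\epsilon]$ using at most $d$ transitions, so that $S_0 = \{[\epsilon]\}$, $S_d \subseteq S_{d+1}$, and every state in $S_d$ has a witness word of length $\le d$.

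The crux is the growth estimate $|S_d| \ge \min(d+1, N)$, which I prove by induction on $d$; the base case $|S_0| = 1$ is immediate. For the inductive step, I claim that if $S_d \neq \res{L}$, then $S_{d+1} \supsetneq S_d$. Indeed, $S_{d+1} = S_d$ would mean that $S_d$ is closed under $\Delta_L$, and since $[\epsilon] \in S_d$ and every state is reachable from $[\epsilon]$, this would force $S_d = \res{L}$, contradicting $S_d \neq \res{L}$. Thus, as long as not all $N$ states have been collected, each additional level contributes at least one new state, so $|S_{d+1}| \ge \min(|S_d|+1, N) \ge \min(d+2, N)$.

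Finally, instantiating $d = k-1$ and using $k \le N$ gives $|S_{k-1}| \ge \min(k, N) = k$. Choosing any $k$ states of $S_{k-1}$ and, for each, a shortest witness $h_i$ of length $\le k-1$, produces the desired set $H_k = \{h_1, \ldots, h_k\}$, with $h_i \not\sim_L h_j$ for $i \neq j$ because the chosen states (equivalence classes) are distinct. I expect the only delicate point to be the \enquote{no growth implies closed implies all states reached} argument in the inductive step; the remainder is routine bookkeeping about BFS levels and the lengths of the representative words.
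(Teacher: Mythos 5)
Your proof is correct, and it reaches the lemma by a somewhat different decomposition than the paper's. The paper's proof takes a shortest representative of each of the $N$ classes and splits into two cases: if every shortest representative has length at most $k-1$, any $k$ of them form $H_k$; otherwise some class has a shortest representative $h$ with $|h|\ge k$, and the first $k$ states along a simple path in $\RL$ from $[\epsilon]$ to $[h]$ supply representatives of lengths $0,1,\ldots,k-1$. You instead prove the uniform growth estimate $|S_d|\ge\min(d+1,N)$ for the BFS levels of $\RL$, which subsumes both cases at once: instantiating $d=k-1$ and picking shortest witnesses for $k$ distinct states gives $H_k$ directly, since distinct states of $\RL$ are distinct $\sim_L$-classes. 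The underlying fact is the same in both arguments---reachable states of $\RL$ accumulate at a rate of at least one per distance level until all are found---but your closure-under-transitions induction makes explicit a point the paper leaves implicit: for the paper's simple path to have at least $k+1$ states, one needs that every path from $[\epsilon]$ to $[h]$ has length at least $|h|\ge k$, which holds precisely because $h$ is a shortest representative of its class (a shorter path would label a shorter representative). In exchange, the paper's version is briefer and exhibits the $k$ witnesses concretely as the prefixes of a single word, whereas yours trades that concreteness for a level-by-level counting statement that needs no case analysis. The one step you flagged as delicate---no growth implies closure under $\Delta_L$ implies all states collected---is indeed the crux, and your justification of it (completeness of $\Delta_L$ together with reachability of every class $[h]$ from $[\epsilon]$ by reading $h$) is sound.
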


\begin{proof}
Let $H$ be a set of shortest representatives of the classes in $\res{L}$.
	If every word $h\in H$ has $|h|\le k-1$, we can define $H_{k}$ as
	an arbitrary subset of size $k$ of $H$. Otherwise, there
	exists $h\in H$ with $|h|\ge k$. Let $[h_{1}],\ldots,[h_{k+1}]$ be the prefix with $k+1$ states of a simple path in $\RL$ from $[\epsilon]$ to $[h]$.  For every $1\le i\le k+1$, we
	have $|h_{i}|=i-1$, and we define $H_{k}=\{h_{1},\ldots,h_{k}\}$.
	\hfill \qed
\end{proof}

Consider a language $L\subseteq \Sigma^\omega$ of infinite words. Here, the language complementing $L$ is ${\it comp}(L)=\Sigma^\omega \setminus L$. A finite word $x \in \Sigma^*$ is a {\em bad prefix\/} for $L$ if for every $y \in \Sigma^\omega$, we have that $x \cdot y \not \in L$. That is, $x$ is a bad prefix if all its extensions are words not in $L$. A language $L \subseteq \Sigma^\omega$ is a {\em safety\/} language if every word not in $L$ has a bad prefix. A language $L$ is a {\em co-safety\/} language if ${\it comp}(L)$ is safety. Equivalently, every word $w \in L$ has a {\em good prefix}, namely a prefix  $x \in \Sigma^*$ such that for every $y \in \Sigma^\omega$, we have that $x \cdot y \in L$.

\subsection{Transducers and Realizability}

Consider two finite alphabets $\Sigma_I$ and $\Sigma_O$.
For two words $x=x_1 \cdot x_2 \cdots \in \Sigma_I^\omega$ and $y = y_1 \cdot y_2 \cdots \in \Sigma_O^\omega$, we define $x \oplus y$ as the word in $(\Sigma_I \times \Sigma_O)^\omega$ obtained by merging $x$ and $y$. Thus, $x\oplus y = (x_1,y_1) \cdot (x_2,y_2) \cdots$.

A {\em $(\Sigma_I/\Sigma_O)$-transducer\/} models a finite-state system that generates letters in $\Sigma_O$ while interacting with an environment that generates letters in $\Sigma_I$. Formally, a $(\Sigma_I/\Sigma_O)$-transducer is $\T=\zug{\Sigma_I,\Sigma_O,\init,S,s_0,\rho,\tau}$, where $\init \in \{{\it sys,env}\}$ indicates who initiates the interaction -- the system or the environment, $S$ is a set of states, $s_0 \in S$ is an initial state, $\rho:S \times \Sigma_I \rightarrow S$ is a transition function, and $\tau:S \rightarrow \Sigma_O$ is a labeling function on the states.
Consider an input word $x=x_1 \cdot x_2 \cdots \in \Sigma_I^\omega$. The {\em run\/} of $\T$ on $x$ is the sequence $s_0,s_1,s_2 \ldots$ such that for all $j \geq 0$, we have that $s_{j+1} = \rho(s_j,x_{j+1})$. The {\em annotation of $x$ by $\T$}, denoted $\T(x)$, depends on $\iota$. If $\init={\it sys}$, then $\T(x) = \tau(s_0) \cdot \tau(s_1) \cdot \tau(s_2) \cdots \in \Sigma_O^\omega$. Note that the first letter in $\T(x)$ is the output of $\T$ in $s_0$. This reflects the fact that the system initiates the interaction. If $\init={\it env}$, then $\T(x) = \tau(s_1) \cdot \tau(s_2) \cdot \tau(s_3) \cdots \in \Sigma_O^\omega$. Note that now, the output in $s_0$ is ignored, reflecting the fact that the environment initiates the interaction. Then, the {\em computation\/} of $\T$ on $x$ is the word $x \oplus \T(x) \in (\Sigma_I \times \Sigma_O)^\omega$.

We say that a $(\Sigma_I/\Sigma_O)$-transducer is \emph{offline} if its behavior is independent of inputs from the environment. Formally, its transition function $\rho$ satisfies $\rho(s, x) = \rho(s, x')$ for all states $s \in S$ and input letters $x, x' \in \Sigma_I$. Note that an offline transducer has exactly one run, and it annotates all words by the same lasso-shaped word $u \cdot v^\omega$, with $u \in \Sigma_O^*$ and $v \in \Sigma_O^+$. We sometimes refer to general transducers as {\em online\/} transducers, to emphasize they are not offline.

Consider a $\omega$-regular language $L \subseteq (\Sigma_I \times \Sigma_O)^\omega$. We say that $L$ is {\em $(\Sigma_I/\Sigma_O)$-realizable by the system\/} if there exists a $(\Sigma_I/\Sigma_O)$-transducer $\T$ with $\init={\it sys}$  all whose computations are in $L$. Thus, for every $x\in \Sigma_I^\omega$, we have that $x \oplus \T(x)\in L$. We then say that $\T$ {\em $(\Sigma_I/\Sigma_O)$-realizes} $L$.
Then, $L$ is {\em $(\Sigma_O/\Sigma_I)$-realizable by the environment\/} if there exists a $(\Sigma_O/\Sigma_I)$-transducer $\T$ with $\init={\it env}$  all whose computations are in $L$.
When $\Sigma_I$ and $\Sigma_O$ are clear from the context, we omit them.
When the language $L$ is $\omega$-regular, realizability reduces to deciding a game with a regular winning condition. Then, by determinacy of games and due to the existence of finite-memory winning strategies \cite{BL69}, we have the following.

\begin{proposition}\label{determinacy}
For every $\omega$-regular language $L \subseteq (\Sigma_I \times \Sigma_O)^\omega$, exactly one of the following holds.
\begin{enumerate}
\item
$L$ is $(\Sigma_I/\Sigma_O)$-realizable by the system.
\item
$\comp(L)$ is $(\Sigma_O/\Sigma_I)$-realizable by the environment.
\end{enumerate}
\end{proposition}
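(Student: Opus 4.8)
The plan is to reduce realizability to a single turn-based game and then invoke the determinacy and finite-memory results of \cite{BL69}. First I would fix a deterministic parity (equivalently, Rabin or Muller) automaton $\mathcal{D}=\zug{\Sigma_I\times\Sigma_O,Q,q_0,\delta,\alpha}$ for $L$, which exists since $L$ is $\omega$-regular. From $\mathcal{D}$ I would build a finite turn-based arena $G$ in which the system owns the positions $Q$: at a position $q$ the system picks an output letter $b\in\Sigma_O$ and moves to an intermediate, environment-owned position $(q,b)$, where the environment picks $a\in\Sigma_I$ and the play advances to $\delta(q,(a,b))$. A play thus traces an infinite word $x\oplus y\in(\Sigma_I\times\Sigma_O)^\omega$ together with the induced run of $\mathcal{D}$, and I would declare the system the winner of a play iff that run is accepting, i.e.\ iff $x\oplus y\in L$. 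The key point is that the system moving first in each round matches exactly the $\init={\it sys}$ semantics (the output $\tau(s_0)$ is emitted before any input is read) and, dually, the $\init={\it env}$ semantics for the environment, whose first label is ignored so that it reacts to the system's letter.

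For the \emph{at least one} direction I would apply \cite{BL69}: since $G$ has an $\omega$-regular (parity) winning condition, it is determined and the winning player has a finite-memory winning strategy from $q_0$. If the system wins, its finite-memory strategy reads environment letters from $\Sigma_I$ and emits system letters from $\Sigma_O$, so after recasting it in the tuple format $\zug{\cdot}$ it is a $(\Sigma_I/\Sigma_O)$-transducer $\T$ with $\init={\it sys}$; every computation it produces is a winning play and hence lies in $L$, giving item~1. If instead the environment wins, its finite-memory strategy reads system letters from $\Sigma_O$ and emits environment letters from $\Sigma_I$, so it is a $(\Sigma_O/\Sigma_I)$-transducer with $\init={\it env}$, and every play it produces is losing for the system, hence lies in $\comp(L)$, giving item~2. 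The only routine work here is to take the strategy's state space (a game position paired with finite memory) as the transducer state space and to check that the ignored-first-label convention of $\init={\it env}$ aligns the indices of $x$ and $y$ correctly.

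For the \emph{at most one} direction I would sidestep the game entirely and argue by composition. Suppose, toward a contradiction, that a $(\Sigma_I/\Sigma_O)$-transducer $\T_{\mathrm{sys}}$ realizes $L$ and a $(\Sigma_O/\Sigma_I)$-transducer $\T_{\mathrm{env}}$ realizes $\comp(L)$ at the same time. Running the two against each other, the system emits the first letter, the environment reacts, and so on, yields a single uniquely determined computation $w\in(\Sigma_I\times\Sigma_O)^\omega$. As a computation of $\T_{\mathrm{sys}}$ we get $w\in L$, and as a computation of $\T_{\mathrm{env}}$ we get $w\in\comp(L)$, a contradiction; combined with the previous paragraph this yields ``exactly one.'' I expect the main obstacle to be precisely this turn-order bookkeeping: I must verify that the $\init={\it sys}$ convention for one transducer and the $\init={\it env}$ convention for the other are genuinely dual, so that the arena faithfully encodes \emph{both} realizability notions and the two candidate transducers engage in a well-defined alternation producing exactly one play. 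Once the indexing under the two conventions is pinned down, the remaining ingredients, namely the existence of a deterministic $\omega$-automaton for $L$ and the finite-memory determinacy of its induced game, are exactly what \cite{BL69} supplies.
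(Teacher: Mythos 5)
Your proposal is correct and takes essentially the same route as the paper: the paper states this proposition without a separate proof, obtaining it directly from the reduction of realizability to a turn-based game with a regular winning condition and the determinacy plus finite-memory-strategy results of \cite{BL69}, which is precisely the argument you spell out. Your explicit handling of the $\init={\it sys}$ versus $\init={\it env}$ indexing and the transducer-composition argument for the ``at most one'' direction are exactly the routine details the paper leaves implicit.
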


\section{Proving and Refuting Bounds on DFAs}
\label{sec:recognizability}

Consider a regular language $L \subseteq \Sigma^*$ and a bound $k \geq 1$.
We view the problem of deciding whether $L$ can be recognized by a $k$-DFA as the problem of
deciding a turn-based two-player game between Refuter and Prover. In
each round in the game, Prover provides a letter from a set $[k]=\{1,2,\ldots,k\}$ that describes the state space of a DFA for $L$ that Prover claims to exist, and Refuter responds with a letter in $\Sigma \cup \{\#\}$, for a special reset letter $\# \not \in \Sigma$. Thus, during the interaction, Prover generates a word $y \in [k]^\omega$ and Refuter generates a word $x \in (\Sigma \cup \{\#\})^\omega$. The word $x$ describes an infinite sequence of words in $\Sigma^*$, separated by $\#$'s, and the word $y$ aims to describe runs of the claimed DFA on the words in the sequence.

Below we formalize this intuition. Let $\Sigma' = \Sigma \cup \{\#\}$, for a letter $\# \not \in \Sigma$. Consider a (finite or infinite) word $w = x \oplus y \in (\Sigma' \times [k])^* \cup (\Sigma' \times [k])^\omega$. Let $x=x_1 \cdot x_2 \cdots$ and $y = y_1 \cdot y_2 \cdots$. We say that $w$ is {\em legal\/} if the following two conditions hold:
\begin{enumerate}
\item For all $1 \leq j < |w|$ with $x_j = \#$, we have $y_{j+1} = y_1$.
\item There exists a function $\delta : [k] \times \Sigma \rightarrow [k]$ such that $y_{j+1}= \delta(y_j, x_j)$ for all $1 \leq j < |w|$ with $x_j \in \Sigma$.
\end{enumerate}

The first condition ensures that Prover starts all runs in the same state $y_1 \in [k]$, which serves as the initial state in her claimed DFA.
The second condition ensures that there exists a deterministic transition relation that Prover follows in all her transitions.

A word $w$ being legal guarantees that Prover follows some $k$-DFA. We now add conditions on $w$ in order to guarantee that this DFA recognizes $L$. Consider a position $1 \leq j < |w|$. Let $\#(j) = \max \{j' : (j' < j \mbox{ and } x_{j'}=\#) \mbox{ or } j' = 0\}$ be the last position before $j$ in which Refuter generates the reset letter $\#$ (or $0$, if no such position exists). When the interaction is in position $j$, we examine the word $w^j$ that starts at position $\#(j) + 1$ and ends at position $j-1$. Thus, $w^j = x_{\#(j)+1} \cdot x_{\#(j)+2} \cdots x_{j-1} \in \Sigma^*$. The run that Prover suggests to $w^j$ is then $y_{\#(j)+1}, y_{\#(j)+2},\ldots,y_j$, and we say that $y$ {\em maps\/} $w^j$ to $y_{j}$. When $y$ is clear from the context, we also say that Prover maps $w^j$ to $y_j$. Note that if $j_1$ and $j_2$ are such that $w^{j_1}=w^{j_2}$, then $w$ being legal ensures that  $w^{j_1}$ and $w^{j_2}$ are mapped to the same state. Now, we say that $w = x \oplus y \in (\Sigma' \times [k])^* \cup (\Sigma' \times [k])^\omega$ {\em agrees with $L$\/} if there exists a set $F \subseteq [k]$ such that for all $1 \leq j < |w|$, Prover maps $w^j$ to an element in $F$ iff $w^j \in L$.

\begin{remark}
Note that a word $w$ agrees with $L$ iff $w$ agrees with $\comp(L)$. Indeed, our definition of agreement with $L$ only guarantees we can define an acceptance condition on top of the claimed $k$-DFA for either $L$ and $\comp(L)$. Since these DFAs dualize each other, they have the same index, and so it makes sense not to distinguish between them in our study.
\hfill \qed \end{remark}

\begin{example}
Let $\Sigma=\{a,b\}$ and $k=2$.
An interaction between Prover and Refuter may generate the prefix of a computation in $(\{a,b,\#\} \times \{1,2\})^\omega$ described in Table \ref{xoplusy example}. Note that while $w$ fixes $\delta(1,a)$, $\delta(2,a)$, and $\delta(2,b)$, it does not fix $\delta(1,b)$.
\
\begin{table}[htb]
	\begin{center}
		\begin{tabular}{rccccccccccc}
		$w=x \oplus y=$ & $(a,1)$ & $(b,2)$ & $(\#,2)$ & $(a,1)$ & $(a,2)$ & $(a,1)$ & $(b,2)$ & $(\#,2)$ & $(\#,1)$ & $(a,1)$ & $(a,2)$ \\
		$j=$ & $1$ & $2$ & $3$ & $4$ & $5$ & $6$ & $7$ & $8$ & $9$ & $10$ & $11$  \\
		$\#(j)=$ & $0$ & $0$ & $0$ & $3$ & $3$ & $3$ & $3$ & $3$ & $8$ & $9$ & $9$  \\
$w^j=$ & $\epsilon$ & $a$ & $ab$ & $\epsilon$ & $a$ & $aa$ & $aaa$ & $aaab$ & $\epsilon$ & $\epsilon$ & $a$ \\
		\end{tabular}
	\end{center}
	\caption{$x \oplus y$ and its analysis.}
	\label{xoplusy example}
\end{table}

 In Figure~\ref{aw fig} we describe two possible DFAs induced by $w$ and the two possible choices for $\delta(1,b)$.

\begin{figure}[htb]
	\begin{center}
	\includegraphics[width=7cm]{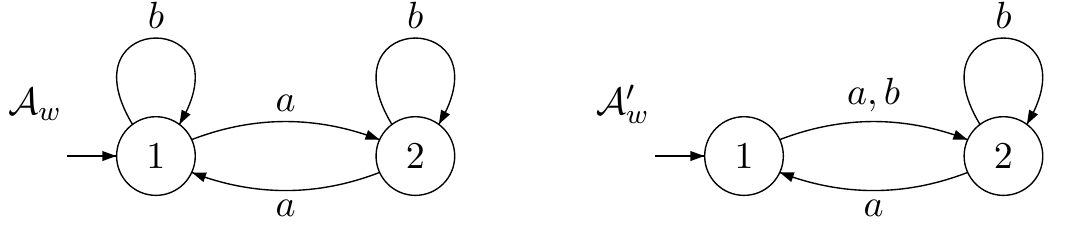}
	\end{center}
	\caption{The DFAs $\A_w$ and $\A'_w$ induced by $w$.}
	\label{aw fig}
\end{figure}

Consider the language $L_1 \subseteq \{a,b\}^*$ of all words with an even number of $a$'s. Then, $w$ agrees with $L_1$, since there is $F = \{1\}$ and all $w^j$ with an even number of $a$'s are mapped to $F$. However, if we consider the language $L_2 \subseteq \{a,b\}^*$ of all words with an even number of $b$'s, there is no $F$ witnessing that $w$ agrees with $L_2$. Clearly, any $F \subseteq \{2\}$ is not a witness, since $\epsilon \in L_2$, but $1 \notin F$. Moreover, $F = \{1,2\}$ cannot be a witness, since $ab \notin L_2$, and $F = \{1\}$ is also ruled out, since $a \in L_2$. Thus $w$ does not agree with $L_2$.
\hfill \qed \end{example}

The language $\DFAreal(L,k) \subseteq (\Sigma' \times [k])^\omega$ of words with correct annotations is then
$\DFAreal(L,k) = \{ x \oplus y \in (\Sigma' \times [k])^\omega : x \oplus y \text{ is legal and agrees with } L  \}.$
Then, $\NoDFAreal(L,k)$ is the language of words with incorrect annotations, thus $\NoDFAreal(L,k) = \comp(\DFAreal(L,k))$.

By \Cref{determinacy}, we have the following:

\begin{proposition}\label{dfa:recognisability}
Consider a language $L \subseteq \Sigma^*$. Exactly one of the following holds:
\begin{itemize}
\item
$L$ can be recognized by a $k$-DFA, in which case $\DFAreal(L,k)$ is $(\Sigma'/[k])$-realizable by the system.
\item
$L$ cannot be recognized by a $k$-DFA, in which case $\NoDFAreal(L,k)$ is $([k]/\Sigma')$-realizable by the environment.
\end{itemize}
\end{proposition}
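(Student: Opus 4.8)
The plan is to deduce the dichotomy directly from \Cref{determinacy}, instantiated with $\Sigma_I = \Sigma'$ and $\Sigma_O = [k]$, and to spend the real effort on the characterizations attached to the two bullets, namely that $\DFAreal(L,k)$ is $(\Sigma'/[k])$-realizable by the system if and only if $L$ is $k$-DFA-recognizable. Before invoking \Cref{determinacy} I would first check that $\DFAreal(L,k)$ is $\omega$-regular: fixing a DFA for the regular language $L$, a nondeterministic B\"uchi automaton can guess at the outset the initial state $y_1 \in [k]$, a transition function $\delta\colon[k]\times\Sigma\to[k]$, and an accepting set $F \subseteq [k]$ (all from finite ranges), and then verify as a safety condition that the two legality clauses hold and that, using the DFA for $L$ to track membership of each factor $w^j$, the set $F$ witnesses agreement. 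This makes $\DFAreal(L,k)$ a finite union of $\omega$-regular languages, hence $\omega$-regular, so that exactly one of the two realizability statements in \Cref{determinacy} holds, with $\comp(\DFAreal(L,k)) = \NoDFAreal(L,k)$.

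For the forward implication of the characterization I would argue that if $L$ is $k$-DFA-recognizable, then $\indx(L)\le k$, so the residual automaton $\RL$ is total and has at most $k$ states, and it can be turned into a system transducer realizing $\DFAreal(L,k)$. Concretely, take $S = \res{L}$, initial state $[\epsilon]$, label each class by a distinct element of $[k]$, let the transducer follow $\Delta_L$ on letters of $\Sigma$ and jump back to $[\epsilon]$ on $\#$. Reading off the definition of the computation for $\init = {\it sys}$, the first output is the label of $[\epsilon]$, the two legality clauses hold by construction (the $\#$-transition forces the output $y_1$; the $\Sigma$-transitions realize a single total $\delta$), and agreement holds with $F$ the set of labels of accepting classes, since the factor $w^j$ is always mapped to the class $[w^j]$.

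The converse implication is where the difficulty lies, and the obstacle is exactly that legality and agreement are properties of a single computation: a transducer realizing $\DFAreal(L,k)$ is only guaranteed to produce, on each input separately, some consistent $\delta$ and some witnessing $F$, and a priori these could vary with the input, so one cannot naively read a single DFA off the transducer. To get around this I would feed the transducer one fixed ``universal'' input $x^\star = \# w^{(1)} \# w^{(2)} \# \cdots$ in which $w^{(1)}, w^{(2)}, \ldots$ enumerates all of $\Sigma^*$. Since the transducer realizes $\DFAreal(L,k)$, the single computation $w^\star$ is legal and agrees with $L$, so it yields one global transition function $\delta$, one initial state $y_1$, and one accepting set $F$. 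Using the observation that factors with $w^{j_1} = w^{j_2}$ are mapped to the same state, the map $G(u) \coloneqq$ the state to which $u$ is mapped in $w^\star$ is well defined on $\Sigma^*$; I would then check $G(\epsilon) = y_1$, that $G(ua) = \delta(G(u),a)$ (reading $a$ after the prefix $u$ inside the block $ua$), and that $u \in L \iff G(u) \in F$ (agreement at the occurrence of $u$). These are precisely the statements that the $k$-state DFA $\langle \Sigma, [k], y_1, \delta, F\rangle$ has $G(u)$ as the state reached on $u$ and recognizes $L$.

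Finally I would assemble the proposition. The first bullet is the forward direction together with the fact, from the determinacy dichotomy, that system-realizability excludes environment-realizability of $\NoDFAreal(L,k)$. For the second bullet, if $L$ is not $k$-DFA-recognizable then by the characterization $\DFAreal(L,k)$ is not system-realizable, so by \Cref{determinacy} its complement $\NoDFAreal(L,k)$ is $([k]/\Sigma')$-realizable by the environment. I expect the enumeration argument in the converse to be the crux; everything else is bookkeeping about the indexing of outputs in the definition of the computation of a system transducer.
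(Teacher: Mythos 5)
Your proposal is correct and follows essentially the same route as the paper: the dichotomy is obtained by instantiating \Cref{determinacy} (the paper leaves the $\omega$-regularity of $\DFAreal(L,k)$ implicit, which you verify explicitly), the forward direction has Prover follow the minimal DFA $\RL$ with a reset on $\#$, and the converse extracts a single $\delta$, $y_1$, and $F$ from one computation of the transducer on a universal input -- exactly the paper's trick, stated right after the proposition. The only (inessential) difference is that your universal input enumerates all of $\Sigma^*$, whereas the paper concatenates just the finitely many words in $\Sigma^k \cdot \#$, using the fact that every transition of a $k$-DFA is reached within $k-1$ steps.
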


By Proposition \ref{dfa:recognisability}, the language $\DFAreal(L,k)$ is $(\Sigma'/[k])$-realizable by the system whenever $k \geq \indx(L)$. Moreover, as we argue below, a $(\Sigma'/[k])$-transducer $\T$ that realizes $\DFAreal(L,k)$ induces a $k$-DFA for $L$. To see this, consider the word $x \in (\Sigma')^*=
w_1 \cdot \# \cdot w_2 \cdots \# \cdot w_{|\Sigma|^k} \cdot \#$ obtained by concatenating all words $w_i \cdot \# \in \Sigma^k \cdot \#$ in some order. Since every transition in a $k$-DFA is reachable by traversing a word of length at most $k-1$, the computation of $\T$ on $x$ must commit on all the transitions in a transition function $\delta:[k] \times \Sigma \rightarrow [k]$, and must also induce a single classification of the states in $[k]$ to accepting and rejecting.
Note also that if $k > \indx(L)$, the transducer may induce several different DFAs for $L$.

By Proposition \ref{dfa:recognisability}, we also have that the language $\NoDFAreal(L,k)$ is $([k]/\Sigma')$-realizable by the environment whenever $k < \indx(L)$. A $([k]/\Sigma')$-transducer that realizes $\NoDFAreal(L,k)$ is termed an {\em $(L,k)$-refuter}.

\section{Certifying Bounds on Recognizability}
\label{sec cert}

Recall that $\DFAreal(L,k)$ contains exactly all words that are legal and agree with $L$. Accordingly, if a word
$x \oplus y \in (\Sigma' \times [k])^\omega$ is not in $\DFAreal(L,k)$, it contains a violation of legality or agreement with $L$, and thus has a bad prefix for $\DFAreal(L,k)$. Formally, we define the language $\violate(L,k) \subseteq (\Sigma' \times [k])^*$ of words that include a violation of legality or agreement with $L$ as follows.
\[\begin{array}{ll}
\violate(L,k) =  \{ x \oplus y : & \text{there is } j \geq 1 \text{ such that } x_j=\# \text{ and } y_{j+1} \neq y_1, \mbox{ or} \\
                & \text{there are } j_1, j_2 \geq 1 \text{ such that }\\
                 & y_{j_1}=y_{j_2}, x_{j_1}=x_{j_2}, \text{ and } y_{{j_1}+1} \neq y_{{j_2}+1},  \\
                & \mbox{or } w^{j_1} \in L, w^{j_2} \notin L \text{ and } y_{{j_1}} = y_{{j_2}}\}.
\end{array}\]
Note that while all the words in $\violate(L,k)$ are bad prefixes for $\DFAreal(L,k)$, there are bad prefixes for $\DFAreal(L,k)$ that are not in $\violate(L,k)$. For example, if $L=\{a^{2n} : n \geq 0\}$, then the word $(a,1)$ is a bad prefix for $\DFAreal(L,1)$, as both $(a,1)(a,1)$ and $(a,1)(\#,1)$, which are the only possible extensions of $(a,1)$ by a single letter, are in $\violate(L,1)$, yet $(a,1)$ itself is not in $\violate(L,1)$. Formally, using the terminology of \cite{KV01d}, the language $\violate(L,k)$ contains all the {\em informative bad prefixes\/} of $\DFAreal(L,k)$, namely these that contain an explanation to the prefix being bad. Since every infinite word not in $\DFAreal(L,k)$ has a bad prefix in $\violate(L,k)$, then restricting attention to bad prefixes in $\violate(L,k)$ is appropriate in the context of certificates. Also, as we discuss in Remark~\ref{violate is close}, a bad prefix of $\DFAreal(L,k)$ that is not informative can be made informative by concatenating to it any letter in $\Sigma' \times [k]$.

\begin{remark}\label{violate is close}
Surprisingly, extending a bad prefix of $\DFAreal(L,k)$ by any letter of $\Sigma' \times [k]$ transforms it to an informative bad prefixes, i.e., makes it an element of $\violate(L,k)$: Let $w = (x_1, y_1) \cdots (x_n, y_n)$ be a bad prefix. In particular, we have $(x_1, y_1) \cdots (x_n, y_n) \cdot (\#, y_{n+1}) \cdot (\#, y_1)^\omega \notin \DFAreal(L,k)$ for all $y_{n+1} \in [k]$. Since continuing a word with $(\#, y_1)$ after a preceding $\#$ does not impact legality or agreement with $L$, the word $w' = (x_1, y_1) \cdots (x_n, y_n) \cdot (\#, y_{n+1})$ must include a violation of legality or agreement with $L$ and thus $w' \in \violate(L,k)$. Lastly, since the definition of $\violate(L,k)$ does not refer to the $\Sigma'$ component of the last letter read, we can replace $\#$ by any letter of $\Sigma'$ and thus have shown that any letter of $\Sigma' \times [k]$ transforms a bad prefix to an informative bad prefix.
\hfill \qed \end{remark}

Refuting recognizability of $L$ by a $k$-DFA, we consider two approaches. In the first, we consider the interaction of Prover with an offline $(L,k)$-refuter. Such a refuter has to  generate a word $x \in (\Sigma')^*$ such that for all $y \in [k]^{|x|}$, we have that $x \oplus y \in \violate(L,k)$. We call $x$ a {\em universal informative bad prefix} (see  \cite{KW12} for a study of bad prefixes for safety languages in an interactive setting). In the second approach, we consider the interaction of Prover with an online $(L,k)$-refuter. There, the goal is to associate every sequence $y \in [k]^\omega$ that is generated by Prover with a sequence $x \in (\Sigma')^\omega$ such that $x \oplus y$ has a prefix in $\violate(L,k)$. In Sections~\ref{na ref} and~\ref{a ref} we compare the two approaches in terms of the length of the certificate (namely the word in $\violate(L,k)$) that they generate.

\subsection{Certification with Offline Refuters}
\label{na ref}

Recall that a word $x \in (\Sigma')^*$ is a {\em universal informative bad prefix\/} for $\DFAreal(L,k)$ if for all $y \in [k]^{|x|}$, we have that $x \oplus y\in \violate(L,k)$.

\begin{theorem}\label{thm:non-adapative}
Consider a regular language $L \subseteq \Sigma^*$ and let $N = \indx(L)$. For every $k < N$, the length of a shortest universal informative bad prefix for $\DFAreal(L,k)$ is at most $O(k^2 \cdot N)$. This bound is tight: There is a family of regular languages $L_1,L_2, \ldots$ such that for every $n \geq 1$, the length of a shortest universal informative bad prefix for $\DFAreal(L_n, N_n - 1)$ is $\Omega(N_n^3)$, where $N_n=\indx(L_n)$.
\end{theorem}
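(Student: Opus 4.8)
The plan is to construct an explicit universal informative bad prefix of length $O(k^2 \cdot N)$ and verify that every annotation $y$ of it lands in $\violate(L,k)$. Since $k < N = \indx(L)$, by \Cref{head bound} there is a set $H_{k+1} = \{h_1,\ldots,h_{k+1}\}$ of pairwise $\sim_L$-inequivalent words, each of length at most $k$. By the pigeonhole principle, any legal annotation $y$ must map two of these words, say $h_i$ and $h_j$, to the same state in $[k]$; since $h_i \not\sim_L h_j$, there is a distinguishing tail $t_{i,j} \in \Sigma^*$ with $h_i \cdot t_{i,j} \in L \iff h_j \cdot t_{i,j} \notin L$. The idea is for Refuter to present, one after another (separated by $\#$'s), all words of the form $h_i \cdot t_{i,j}$ for the relevant pairs. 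The key subtlety is that Refuter is offline and cannot see $y$, so she cannot know which pair collides; she must therefore feed \emph{all} pairs. First I would bound $|t_{i,j}|$: two inequivalent classes of the minimal DFA $\R_L$ are distinguished by a word of length at most $N-1$ (a shortest path to a separating state), so $|t_{i,j}| \le N-1$. The prefix then consists of at most $\binom{k+1}{2} = O(k^2)$ blocks, each of the form $h_i \cdot t_{i,j} \cdot \#$, of length $O(k + N) = O(N)$, giving total length $O(k^2 \cdot N)$.

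The correctness argument is the content of the upper bound. I would argue that for \emph{every} $y \in [k]^{|x|}$, the word $x \oplus y$ is in $\violate(L,k)$: if $y$ is not legal, we are immediately done; if $y$ is legal, pigeonhole forces a collision $y$ maps $h_i$ and $h_j$ to the same state $q$, and legality then forces $h_i \cdot t_{i,j}$ and $h_j \cdot t_{i,j}$ to be mapped to the \emph{same} state as well (equal states under equal suffixes). Since these two words disagree on membership in $L$, the third disjunct in the definition of $\violate(L,k)$ is triggered. The delicate point is ensuring both blocks $h_i \cdot t_{i,j}$ and $h_j \cdot t_{i,j}$ actually appear in $x$ so that the positions $j_1, j_2$ witnessing the violation both exist; this is why the prefix must explicitly contain, for each pair $\{i,j\}$, the two parallel blocks rather than one.

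**Lower bound.** For tightness I would exhibit a family $L_n$ with $\indx(L_n) = N_n$ and $k = N_n - 1$, for which any universal informative bad prefix has length $\Omega(N_n^3)$. The natural candidate is a language whose minimal DFA has long distinguishing words and many classes that are hard to separate, so that Refuter is forced to spend $\Omega(N_n)$ letters per block and $\Omega(N_n^2)$ blocks. The main obstacle, and the part I expect to be hardest, is the lower bound: one must show that \emph{no} cleverly shorter offline prefix works, i.e. that against every short $x$ there is some legal and $L$-agreeing annotation $y$ (a response by a ``cheating'' Prover who need not follow a genuine $k$-DFA but only avoid detection on the finitely many blocks presented). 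I would design $L_n$ so that the distinguishing words are essentially unique and of length $\Theta(N_n)$, forcing each of the $\Theta(N_n^2)$ potential collisions to be refuted by its own long block; an adversary (Prover) argument then shows that omitting any block, or shortening it, leaves a consistent legal annotation, yielding the $\Omega(N_n^3)$ bound. Constructing $L_n$ so that these $\Theta(N_n^2)$ refutations cannot share letters is the crux.
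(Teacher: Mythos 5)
Your upper bound is correct and is essentially the paper's own proof: the set $H=\{h_1,\ldots,h_{k+1}\}$ of short representatives from \Cref{head bound}, distinguishing tails of length $\le N$, a prefix containing \emph{both} blocks $h_i\cdot t_{i,j}\cdot\#$ and $h_j\cdot t_{i,j}\cdot\#$ for every pair (the paper likewise uses $k(k+1)$ blocks of length $\le k+N+1$), and then pigeonhole plus legality plus agreement to land in $\violate(L,k)$. Your flagged ``delicate point'' about needing both parallel blocks is exactly right and is handled the same way in the paper.

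The lower bound, however, is a genuine gap: you describe the desiderata (many pairs, long mandatory refutation blocks, an adversary annotation surviving any short prefix, non-sharing of letters) but never exhibit $L_n$ nor prove that every short $x$ admits a surviving annotation, and you yourself concede that the disjointness construction is the unresolved crux. The paper resolves it with $\Sigma_n=\{a,b_1,\ldots,b_n\}$ and $L_n=\{a^n b_i^2 : 1\le i\le n\}$, with $\indx(L_n)=N_n=2n+3$ and $k=N_n-1$. Two ideas make this work, and both run somewhat against your stated plan. First, the cheating Prover is trivial to construct: for a pair $i\ne j$, she follows the $k$-DFA obtained from the minimal DFA $\A_n$ by merging the two states $r_i,r_j$ reached via $a^n b_i$ and $a^n b_j$; if $x$ contains neither $a^n b_i b_j$ nor $a^n b_j b_i$ as a subword, the resulting annotation extends (by padding with reset letters) to a full word in $\DFAreal(L_n,k)$, so $x\oplus y\notin\violate(L_n,k)$ and $x$ is not universal. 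Second, the $\Theta(N_n)$ cost per pair sits in the \emph{heads}, not the tails: the distinguishing tails here have length $1$ (namely $b_i$ and $b_j$), whereas your plan called for long, ``essentially unique'' distinguishing words. Each mandatory subword $a^n b_i b_j$ (or $a^n b_j b_i$) has length $n+2$, and distinct occurrences cannot overlap, since each consists of $n$ consecutive $a$'s followed by two $b$-letters. Hence a universal informative bad prefix must contain $n(n-1)/2$ pairwise disjoint occurrences, giving $|x|\ge (n+2)\cdot n(n-1)/2=\Omega(N_n^3)$. Note also that the alphabet grows with $n$, which is what lets the $\Theta(N_n^2)$ obligations be letter-disjoint; the paper's subsequent remark re-encodes $b_1,\ldots,b_n$ over a fixed alphabet to recover the bound there. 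Without some construction of this kind, your lower-bound half does not go through.
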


\begin{proof}
We start with the upper bound and construct, for every $k < N$, a universal informative bad prefix for $\DFAreal(L,k)$ of length $O(k^2 \cdot N)$.

Let $H = \{h_1,\ldots, h_{k + 1}\}$ be representatives of $k + 1$ distinct Myhill-Nerode classes. Since $k < N$, such a set $H$ exists. Moreover, by Lemma~\ref{head bound}, we can assume that $|h_i| \leq k$, for all $1 \leq i \leq k+1$. For each pair $\zug{h_i, h_j}$, there is a distinguishing tail $t_{i,j}$ of length at most $N$.
Let  $x$ be the concatenation of all words of the form $h_i \cdot t_{i,j} \cdot \#$ and $h_j \cdot t_{i,j} \cdot \#$, for all pairs. There are $k \cdot (k+1)$ such words, each of length at most $k+N+1$, so $|x| \leq (k+N+1)\cdot k \cdot (k+1)$, which is $O(k^2\cdot N)$.
Below we prove that $x$ is a universal informative bad prefix.

	Let $y \in [k]^{|x|}$. For every $h_i\in H$, the subword $\# \cdot h_i$ appears in $x$, so $y$ maps $h_i$ to some element in $[k]$. By the pigeonhole principle, there are two distinct words $h_i$ and $h_j$ such that $y$ maps both words to the same element. If $x \oplus y$ is legal, the transitions are consistent, so $y$ maps both $h_i \cdot t_{i,j}$ and $h_j \cdot t_{i,j}$ to the same state. Then, however, as exactly one of $h_i \cdot t_{i,j}$ and $h_j \cdot t_{i,j}$ is in $L$, there is no $F\subseteq [k]$ that satisfies the condition of agreement with $L$, and so $x \oplus y \in \violate(L,k)$. Hence, $x$ is a universal informative bad prefix for $\DFAreal(L,k)$.

For a matching lower bound, we describe a family of regular languages $L_1,L_2, \ldots$ such that for every $n \geq 1$, the length of a shortest universal informative bad prefix for $\DFAreal(L_n, N_n - 1)$ is $\Omega(N_n^3)$, where $N_n=\indx(L_n)$. For $n \geq 1$, let $\Sigma_n = \{a, b_1, \ldots, b_n\}$ and consider the language $L_n = \{a^n b_i^2 : 1\leq i \leq n\}$. Let $\A_n$ be a minimal DFA for $L_n$. For example, $L_3 = \{aaab_1b_1, aaab_2b_2, aaab_3b_3\}$, and the DFA $\A_3$ for $L_3$ appears in Figure \ref{fig:dfa_L3}.

\begin{figure}
	\begin{center}
	\vspace{-2mm}
	\includegraphics[width=9cm]{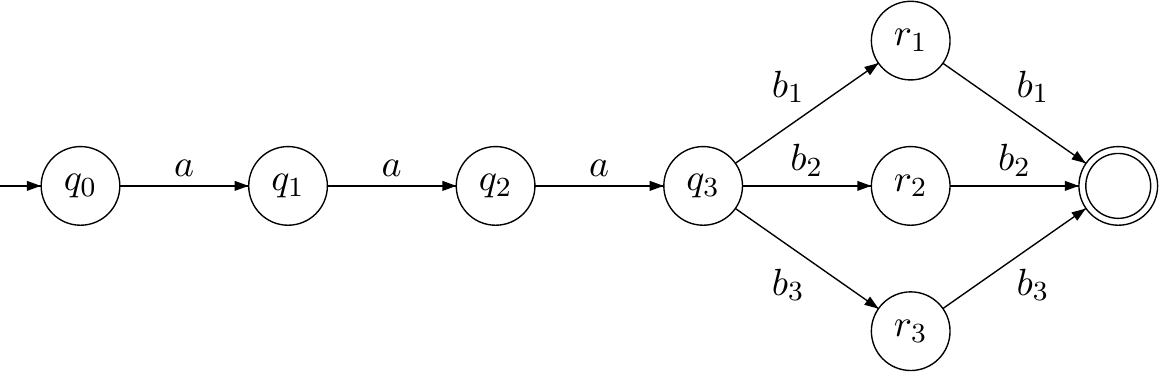}
	\end{center}
	\caption{A DFA for $L_3$.}
	\label{fig:dfa_L3}
\end{figure}

	It is easy to see that $\indx(L_n)=N_n = 2n + 3$, corresponding to (see Figure~\ref{fig:dfa_L3}) $n + 1$ states $q_0, \ldots, q_n$, $n$ states $r_1, \ldots, r_n$, an accepting state, and a rejecting sink, which we omit from the figure.

	Let $k = N_n - 1$, and consider some prefix $x \in (\Sigma')^*$. For $1 \leq i \neq j \leq n$, the words $a^n b_i$ and $a^n b_j$ belong to different Myhill-Nerode classes, corresponding to the states $r_i$ and $r_j$, respectively. The distinguishing tails are $b_i$ and $b_j$. We claim that if $x$ does not contain the subword $a^n b_i b_j$ or $a^n b_j b_i$, then there is $y \in [k]^{|x|}$ such that $x \oplus y \not \in \violate(L,k)$, and so $x$ is not a universal informative bad prefix for $\DFAreal(L,k)$. To see this, consider the word $y\in [k]^{|x|}$ constructed by following the DFA obtained from $\A_n$ by merging the states $r_i$ and $r_j$. We can choose $x' = \#^\omega$ and $y'\in [k]^\omega$ such that $y'_1$ is consistent with the transitions in $y$, and $y'_j = y_1$ for all $j\geq 2$. If $x=y=\epsilon$, we can choose $x' \oplus y' = (\#,1)^\omega$. Then, $(x\cdot x')\oplus (y\cdot y') \in \DFAreal(L,k)$, and so $x\oplus y$ is not an informative bad prefix for $\DFAreal(L,k)$.

	Hence, if $x$ is a universal informative bad prefix for $\DFAreal(L,k)$, then for every $1 \leq i \neq j \leq n$, it contains the subwords $a^n b_i b_j$ or $a^n b_j b_i$, which are of length $n + 2$. There are $n \cdot (n-1)/2$ such subwords and they are disjoint. Therefore, $|x| \geq (n+2)\cdot n \cdot (n-1)/2$, which is $\Omega(N_n^3)$.
\hfill \qed \end{proof}

\subsection{Certification with Online Refuters}
\label{a ref}

We now consider refuters that take Prover's choices into account when outputting letters. We show that this capability allows an interactive refuter to win in fewer rounds than an offline refuter.

\begin{theorem}\label{interactive refuters bad prefix}
Consider a regular language $L \subseteq \Sigma^*$ and let $N = \indx(L)$. For every $k < N$, there exists an $(L,k)$-refuter that generates a word in $\violate(L,k)$ within $O(k^2 + N)$ rounds. This bound is tight: There is a family of regular languages $L_1,L_2,\ldots$ such that for every $n\geq 1$, every $(L,k)$-refuter needs at least $\Omega(N_n^2)$ rounds to construct a word in $\violate(L_n,N_n-1)$, where $N_n=\indx(L_n)$.
\end{theorem}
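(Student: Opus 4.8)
The plan is to prove the upper bound by building an explicit online refuter that \emph{observes} Prover's answers, and the lower bound by reusing the family $L_n=\{a^n b_i^2:1\le i\le n\}$ from \Cref{thm:non-adapative} against a \emph{delaying} Prover strategy. For the upper bound, the online refuter proceeds in two phases. In the \emph{probing} phase it feeds $\#\cdot h_1\cdot\#\cdot h_2\cdots\#\cdot h_{k+1}$, where $h_1,\dots,h_{k+1}$ represent $k+1$ distinct Myhill--Nerode classes and, by \Cref{head bound} (using $k+1\le N$), satisfy $|h_i|\le k$; after each block $\#\cdot h_i$ it reads the state $s_i\in[k]$ that Prover assigns to $h_i$. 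If legality is ever violated we already have a word in $\violate(L,k)$; otherwise, since $k+1$ values $s_i$ lie in $[k]$, the pigeonhole principle yields $i\neq j$ with $s_i=s_j$. This phase costs $O(k^2)$ rounds. In the \emph{exploitation} phase the refuter feeds $\#\cdot h_i\cdot t\cdot\#\cdot h_j\cdot t$, where $t$ is a distinguishing tail with $|t|\le N$ and, say, $h_i\cdot t\in L$ but $h_j\cdot t\notin L$. If Prover remains legal, determinism forces both $h_i\cdot t$ and $h_j\cdot t$ to the common state $\delta(s_i,t)=\delta(s_j,t)$, and the two resulting positions witness a violation of agreement with $L$; if Prover deviates it violates legality. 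Either way a word in $\violate(L,k)$ is produced after $O(k+N)$ further rounds, for a total of $O(k^2+N)$.

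For the lower bound I take the same $L_n$, for which $N_n=\indx(L_n)=2n+3$ and $k=N_n-1=2n+2$, so that $\Omega(N_n^2)=\Omega(n^2)$. Recall that the minimal DFA has states $q_0,\dots,q_n$ (reached by prefixes of $a^n$), states $r_1,\dots,r_n$ with $q_n\xrightarrow{b_i}r_i$, an accepting state reached via $r_i\xrightarrow{b_i}$, and a rejecting sink absorbing all remaining transitions. I would let Prover simulate this DFA \emph{honestly} on $q_0,\dots,q_n$, the accepting state and the rejecting sink, while committing states to the classes $r_1,\dots,r_n$ \emph{lazily}: a fresh state is assigned the first time Refuter drives to a given $r_i$. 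Since only $(2n+2)-(n+3)=n-1$ states remain for the $n$ classes $r_1,\dots,r_n$, Prover can keep every probed $r_i$ distinct until Refuter drives to the $n$-th distinct one, where it must reuse a state. The crucial point is that until that moment Prover's answers coincide with a genuine partial run of the minimal DFA, hence are legal and agree with $L$, so \emph{no} prefix can lie in $\violate(L_n,k)$.

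The heart of the lower bound is then the geometry of the DFA for $L_n$: the only way to reach $q_n$ is to read $a^n$ from the initial state, and from every $r_i$ (and from the accepting state) each letter other than the unique continuation leads to the rejecting sink. Hence, to probe a new class $r_j$ the refuter is forced to issue a $\#$ reset and re-read the whole prefix $a^n$; the $a^n$ prefixes cannot be amortized across probes. By the previous paragraph all $n$ classes $r_1,\dots,r_n$ must be probed before any violation can arise, which costs at least $n$ resets each followed by $a^n$, i.e.\ $\Omega(n^2)=\Omega(N_n^2)$ rounds. I expect the main obstacle to be making this last argument airtight: verifying that the lazy Prover strategy is internally consistent --- a legitimate partial DFA run that is legal and agrees with $L$ --- so that it truly cannot be caught before the forced merge, and formalizing that the structure of $L_n$ prevents sharing the $a^n$ prefixes. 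By contrast, the upper bound is essentially the observation that reading the single collision online replaces the quadratically many hedging pairs of the offline construction in \Cref{thm:non-adapative}.
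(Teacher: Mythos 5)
Your proposal is correct and follows essentially the same route as the paper: the upper bound is the paper's exact two-phase strategy (probe $k+1$ short class representatives, then exploit the pigeonhole collision with a single distinguishing tail), and the lower bound uses the same family $L_n=\{a^n b_i^2 : 1\le i\le n\}$ with the same counting of $n$ disjoint occurrences of the subwords $a^n b_i$, each of length $n+1$. Your ``lazy'' Prover is merely a uniform repackaging of the paper's argument (which instead lets Prover follow $\A_n$ with one state $r_i$ removed whenever some subword $a^n b_i$ is absent), and both versions verify equally that no prefix lies in $\violate(L_n,N_n-1)$ before all $n$ classes are probed.
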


\begin{proof}
We start with the upper bound, by describing a winning strategy. As in the offline case, let $H=\{h_1, \ldots, h_{k + 1}\}$ be representatives of distinct Myhill-Nerode classes, each of length at most $k$. Unlike the offline case, where Refuter outputs all pairs of heads and distinguishing tails, here a single pair suffices to achieve the same effect. Refuter starts the interaction by outputting $h_1 \cdot \# \cdots \# \cdot h_{k + 1} \cdot \#$. By the pigeonhole principle, there are distinct words $h_i$ and $h_j$ that are mapped by Prover to the same state. Refuter then outputs $ h_i \cdot t_{i,j}\cdot \# \cdot h_j \cdot t_{i,j} \cdot \#$. If Prover does not violate the conditions of legality, it maps $h_i \cdot t_{i,j}$ and $h_j \cdot t_{i,j}$ to the same state. Exactly one of them is in $L$, so there is no $F\subseteq [k]$ that can satisfy agreement with $L$, and so the generated word is in $\violate(L,k)$. We now analyze its length. Recall that Refuter first outputs $k + 1$ words of length at most $k$ each, separated by $\#$'s, and then two words of length at most $k+N$ each, again separated by $\#$. Thus, the length of the prefix is $k(k + 1) + 2(k+N) + k + 3$, which is $O(k^2 + N)$.

For a matching lower bound, we describe a family of regular languages $L_1,L_2\ldots$ such that for every $n\geq 1$, every refuter needs at least $\Omega(N_n^2)$ rounds to generate a word in $\violate(L_n,N_n-1)$, where $N_n=\indx(L_n)$. Consider the DFA $\A_n$ from the offline lower bound, again with $k = N_n - 1$. We claim that $\Omega(N_n^2)$ rounds are required to generate a word in $\violate(L_n,N_n-1)$.

Let $x\in (\Sigma')^*$ be the word generated by Refuter. Assume there exists $1\leq i\leq n$ such that the subword $a^n b_i$ does not appear in $x$. The state corresponding to $a^n b_i$ is $r_i$. Hence, Prover can follow the DFA obtained by removing the state $r_i$ from $\A_n$ without violating legality or agreement with $L$. Therefore, in order to guarantee a generation of a word in $\violate(L_n,N_n-1)$, Refuter  must output all the words $a^n b_1,\ldots , a^n b_n$ in some order. Each of these $n$ words has length $n + 1$, and they are disjoint. Their total length is therefore at least $n\cdot (n + 1)$, which is $\Omega(N_n^2)$.
\hfill \qed \end{proof}

\begin{remark}
{\bf Fixed alphabet}
	In the proofs of Theorems \ref{thm:non-adapative} and~\ref{interactive refuters bad prefix}, we use languages $L_n$ over an alphabet $\Sigma_n$ that depends on $n$. By replacing the letters $b_1,\ldots,b_n$ by words in $\{a,b\}^{\lfloor \log n \rfloor}$, one gets languages over the fixed alphabet $\Sigma=\{a,b\}$ that exhibit the claimed lower bounds for both online and offline refuters.\hfill \qed
\end{remark}

\subsection{Optimal Survival Strategies for Provers}

Assume that $L$ is not $k$-DFA recognizable. Then, there is an $(L,k)$-refuter, and Refuter is going to win a game against Prover and generate a word in $\violate(L,k)$. Suppose that Prover aims at prolonging the interaction. It is tempting to think that the following greedy strategy is optimal for such an objective: Prover follows the transitions of $\RL$. If $k < \indx(L)$, then Prover may be forced to deviate from $\RL$ and make a ``mistake'', namely choose to output one of the $k$ states that have already been exposed. Using this strategy, Prover can prolong the game at least until $k + 1$ different states are exposed. The following example shows that this strategy is not necessarily best at prolonging the game as long as possible (no matter how clever the choice when a ``mistake" is forced is).

\begin{example}
	For $n \geq 1$, consider the language $L_n = \{w : w_1 = b\text{ or }|w|=n\}$ over $\Sigma = \{a,b\}$	.
	\begin{figure}
		\begin{center}
			\includegraphics[width=10cm]{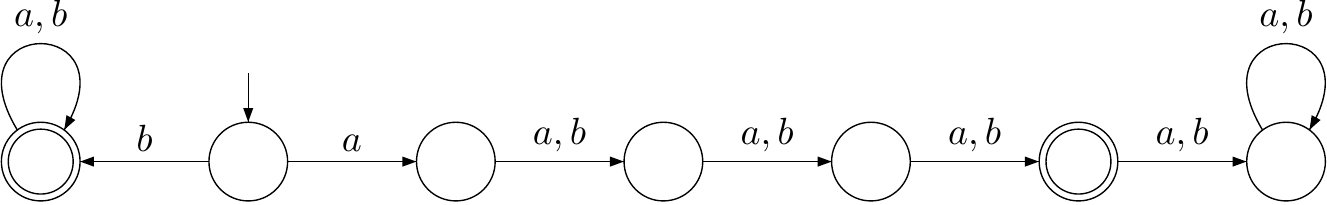}
		\end{center}
		\caption{A DFA for $L_4$.}
		\label{fig:dfa_L4}
	\end{figure}

	Denote the canonical DFA for $L_n$ by $\A_n$. For example, $\A_4$ appears in Figure \ref{fig:dfa_L4}.
	The number of states is $n + 3$. Let $k = n + 2$. We claim that if Prover follows $\A_n$, then $x = a^{n + 1} \cdot \# \cdot  b \cdot a$ induces a bad prefix $x \oplus y$ for $L_n$. The first word, $a^{n + 1}$, forces Prover to expose all $n + 2$ states, after which it cannot have an accepting sink. Then, $\# \cdot b$ forces Prover to choose an existing state instead of an accepting sink. To prolong the game, it chooses the only accepting state, and then the last $a$ ends the game. The number of rounds needed to win against this prover is at most $n + 4$.

	Prover can do better than $n + 4$ rounds. Let $L'_n = \{w : w_1 = b\text{ or } (w_1 = a\text{ and } |w| = 0 \bmod n)\}$ and let $\B_n = \mathcal{R}_{L'_n}$ be the minimal DFA for $L'_n$. For example, $\B_4$ appears in Figure \ref{fig:dfa_B4}.
	The shortest possible word length on which $\A_n$ and $\B_n$ disagree is $2n$ (for example, $a^{2n} \in L_n' \setminus L_n$), which is better than $n + 4$. It can be shown that an $(L_n,k)$-refuter can generate a bad prefix for $L_n$ in at most $2n$ rounds against all provers, so $\B_n$ is optimal in that sense.
	\begin{figure}
		\begin{center}
			\includegraphics[width=8cm]{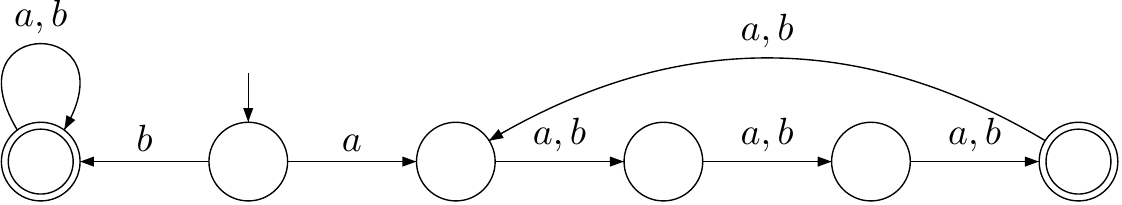}
		\end{center}
		\caption{The DFA $\B_4$.}
		\label{fig:dfa_B4}
	\end{figure}
\hfill \qed \end{example}

\section{Bounds on DFA Separation}

Consider three languages $L_1, L_2,L \subseteq \Sigma^*$. We say that $L$ is a {\em separator} for $\zug{L_1,L_2}$ if $L_1 \subseteq L$ and $L \cap L_2 = \emptyset$. Equivalently, $L_1 \subseteq L \subseteq \comp(L_2)$.
For $k \geq 1$, we say that a pair of languages $\zug{L_1,L_2}$ is {\em $k$-DFA-separable} iff there is a $k$-DFA $\A$ such that $L(\A)$ separates $\zug{L_1,L_2}$. We extend the definition to DFAs and say that two DFAs $\A_1$ and $\A_2$ are separated by a DFA $\A$, if their languages are separated by $L(\A)$.

In this section we study refuting and certifying bounds on DFA separation. We first give proofs that deciding (strict and non-strict) $k$-DFA-separability, is NP-complete.
The problem being NP-hard suggests that there is no clean theory of equivalence classes that is the base for offline certification. We continue and describe interactive certification protocol for $k$-DFA-separability.

\subsection{Hardness of Separation}

The following Theorem \ref{sep np} is considered by the literature (e.g., \cite{Nei12}) to be a consequence of \cite{Pfl73}. Since we also investigate the strict-separation case and there is a progression of techniques, we describe below an alternative and explicit proof.

\begin{theorem}
\label{sep np}
	Given DFAs $\A_1$ and $\A_2$, and a bound $k \geq 1$, deciding whether $\zug{\A_1,\A_2}$ is $k$-DFA-separable is NP-complete.
\end{theorem}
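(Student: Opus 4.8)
**

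The plan is to prove NP-completeness in two parts: membership in NP and NP-hardness. For membership, the key observation is that if a $k$-DFA separator exists, then a separator whose structure is bounded by the problem size exists, so a nondeterministic machine can guess it and verify in polynomial time. Specifically, I would guess a $k$-DFA $\A$ over $\Sigma$ (its transition function and final-state set have size polynomial in $k$ and $|\Sigma|$) and then verify the two containment conditions $L(\A_1) \subseteq L(\A)$ and $L(\A) \cap L(\A_2) = \emptyset$. Each containment is a language-inclusion / emptiness-of-intersection check between DFAs, decidable in polynomial time by a product construction and reachability analysis, so the whole verification runs in polynomial time and witnesses membership in NP.

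For NP-hardness, I would reduce from \textsc{DFA identification}, which the introduction already cites as NP-complete \cite{Gol78}. Given an instance consisting of finite sets $S_1, S_2 \subseteq \Sigma^*$ of positive and negative words together with a bound $k$, I would build two DFAs $\A_1$ and $\A_2$ that accept exactly the finite languages $S_1$ and $S_2$, respectively. Since $S_1$ and $S_2$ are finite, each is recognized by a DFA of size polynomial in the total length of the words (a trie-like acceptor together with a rejecting sink), so the reduction is polynomial. The claim is then that $\zug{\A_1,\A_2}$ is $k$-DFA-separable iff there is a $k$-DFA accepting all of $S_1$ and none of $S_2$, which is precisely the DFA-identification instance: a separator $L(\A)$ satisfies $S_1 = L(\A_1) \subseteq L(\A)$ and $L(\A) \cap L(\A_2) = L(\A) \cap S_2 = \emptyset$, so $\A$ accepts every word of $S_1$ and rejects every word of $S_2$, and conversely any such $\A$ separates $\zug{\A_1,\A_2}$.

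The main subtlety I expect is in the NP-membership argument rather than in hardness: I must argue that it suffices to guess a separator with at most $k$ states rather than an arbitrary one, and that a canonical encoding of such a DFA is of size polynomial in the input. Since the bound $k$ is given as part of the input, $k$ could in principle be large, but the guessed object is still of size $O(k \cdot |\Sigma|)$, which is polynomial in the input (as $k$ is written in the input). I would also note that, because the transition function may be partial, the verification of $L(\A) \cap L(\A_2) = \emptyset$ must treat missing transitions as leading to an implicit rejecting sink; handling this bookkeeping correctly in the product construction is the one place where care is needed, though it introduces no real difficulty. Hardness then follows immediately from the reduction, completing the proof that the problem is NP-complete.
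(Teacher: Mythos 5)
Your proposal follows the same decomposition as the paper's proof: membership in NP by guessing a $k$-state separator and verifying both containments via product constructions, and NP-hardness by reducing from DFA identification \cite{Gol78}, building trie-like DFAs $\A_1$ and $\A_2$ with $L(\A_1)=S_1$ and $L(\A_2)=S_2$. The hardness half is complete and coincides with the paper's argument, including the observation that the trie acceptor for a finite set $S$ has at most $\sum_{w\in S}|w|$ states, so the reduction is polynomial.

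There is, however, one genuine gap, and it sits exactly where you located the ``main subtlety'' before dismissing it. Your justification that the guessed $k$-DFA is polynomial in the input ``as $k$ is written in the input'' holds only if $k$ is encoded in unary. If $k$ is given in binary --- and the theorem statement does not fix an encoding --- the input devotes only $O(\log k)$ bits to the bound, so a witness of size $\Theta(k\cdot|\Sigma|)$ can be exponentially larger than the input, and plain guess-and-verify does not establish membership in NP. The paper closes this hole with a short observation you are missing: if $k \geq \indx(L(\A_1))$, then the minimal DFA for $L(\A_1)$ is itself a $k$-DFA separating $\zug{\A_1,\A_2}$ (assuming $L(\A_1)\cap L(\A_2)=\emptyset$, which is necessary for any separator to exist and is checkable deterministically in polynomial time), so one may assume without loss of generality that $k < \indx(L(\A_1)) \leq |\A_1|$; under this assumption the witness is polynomial in the input size regardless of how $k$ is encoded. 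Everything else in your membership argument, including the treatment of partial transition functions via an implicit rejecting sink, is sound.
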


\begin{proof}
Membership in NP is easy, as given a candidate separator $\A$ of size $k$, we can verify that $L(\A_1) \subseteq L(\A)$ and $L(\A) \cap L(\A_2) = \emptyset$ in polynomial time. Note that if $k \geq \indx(L(\A_1))$, then $\zug{\A_1,\A_2}$ is $k$-DFA-separable by $\A_1$. Thus, we can assume that $k <  \indx(L(\A_1))$, and so membership in NP applies also for the case $k$ is given in binary.

	For NP-hardness, we reduce from  the {\em DFA identification\/} problem. Recall that there, given sets $S_1,S_2 \subseteq \Sigma^*$ of positive and negative words, and a bound $k \geq 1$, we seek a $k$-DFA that accepts all words in $S_1$ and no word in $S_2$. By \cite{Gol78}, DFA identification is NP-complete. Given $S_1$, $S_2$, and $k$, our reduction constructs DFAs $\A_1$ and $\A_2$ such that $L(\A_1) = S_1$ and $L(\A_2) = S_2$. Clearly, a $k$-DFA solves the DFA identification problem for $S_1$, $S_2$, and $k$, iff it solves the $k$-DFA-separation of $\A_1$ and $\A_2$.

	Constructing a DFA $\A_S$ such that $L(\A_S)=S$, for some finite set $S \subseteq \Sigma^*$ can be done in polynomial time, by traversing prefixes of words in $S$. Formally, we define $\A_S=\zug{\Sigma,Q,q_0,\delta,F}$, where $Q=\{w : w \mbox{ is a prefix of a word in }S\}$, $q_0=\epsilon$, and for all $w \in Q$ and $\sigma \in \Sigma$, we have that $\delta(w,\sigma)=w \cdot \sigma$ if $w \cdot \sigma \in S$, and $\delta(w,\sigma)$ is undefined otherwise. Finally, $F=S$. It is easy to see that $L(\A_S)=S$ and that $|\A_S| \leq \sum_{w\in S} |w|$.
\hfill \qed \end{proof}

Consider three languages $L_1, L_2,L \subseteq \Sigma^*$. We say that $L$ is a {\em strict separator} for $\zug{L_1,L_2}$ if $L_1 \subset L$, $L \cap L_2 = \emptyset$, and $L \cup L_2 \subset \Sigma^*$. Equivalently, $L_1 \subset L \subset \comp(L_2)$. For $k \geq 1$, we say that a pair of languages $\zug{L_1,L_2}$ is {\em $k$-DFA-strictly-separable} iff there is a $k$-DFA $\A$ such that $L(\A)$ strictly separates $\zug{L_1,L_2}$. Again, we extend the definition to DFAs.

\begin{theorem}\label{ssep np}
Given DFAs $\A_1$ and $\A_2$, and a bound $k \geq 1$, deciding whether $\zug{\A_1,\A_2}$ is $k$-DFA-strictly-separable is NP-complete.
\end{theorem}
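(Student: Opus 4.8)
The plan is to follow the template of Theorem~\ref{sep np}: membership in NP is the routine direction, and NP-hardness is again obtained by a reduction from DFA identification \cite{Gol78}. The only genuinely new difficulty is enforcing the two strictness conditions $L_1 \subset L$ and $L \cup L_2 \subset \Sigma^*$ \emph{without changing the bound} $k$: raising $k$ would let a separator spend extra states on the core instance and would destroy faithfulness, since a positive separation instance must not become solvable merely because of the padding.

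For membership in NP, given a candidate $k$-DFA $\A$ I would verify the four conditions $L(\A_1) \subseteq L(\A)$, $\ L(\A) \cap L(\A_2) = \emptyset$, $\ L(\A) \cap \comp(L(\A_1)) \neq \emptyset$ (left strictness), and $\comp(L(\A)) \cap \comp(L(\A_2)) \neq \emptyset$ (right strictness). Each is a product of the given automata followed by an (non)emptiness test, hence polynomial. The binary-$k$ case is handled as in Theorem~\ref{sep np}: a regular strict separator exists iff $L(\A_1) \cap L(\A_2) = \emptyset$ and the neutral set $\comp(L(\A_1) \cup L(\A_2))$ contains at least two words, in which case $L(\A_1)$ together with one shortest neutral word is a strict separator of size polynomial in $|\A_1|,|\A_2|$; so if $k$ exceeds this bound we decide the instance directly, and otherwise $k$ is polynomially bounded.

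For NP-hardness, I reduce from DFA identification $\zug{S_1,S_2,k}$, which we may assume is nontrivial, in particular $S_1 \neq \emptyset$ (if $S_1 = \emptyset$ the instance is solvable in polynomial time and we output a fixed instance with the matching answer). I take two fresh letters $c,d \notin \Sigma$, set $\Gamma = \Sigma \cup \{c,d\}$, and build, exactly as in Theorem~\ref{sep np}, DFAs $\A_1,\A_2$ over $\Gamma$ with $L(\A_1)=S_1$ and $L(\A_2)=S_2$ (transitions on $c,d$ left undefined). The bound $k$ is kept unchanged, and the claim is that $\zug{\A_1,\A_2}$ is $k$-DFA-strictly-separable iff $\zug{S_1,S_2,k}$ is a positive identification instance.

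The heart of the argument, and the step I expect to be the main obstacle, is realizing both strictness witnesses \emph{for free}, i.e. without spending any additional state. Given a $k$-DFA $\A$ solving identification, it has a reachable accepting state $q_a$ (since $\emptyset \neq S_1 \subseteq L(\A)$). I extend $\A$ to $\Gamma$ by setting $\delta(q,c)=q_a$ for every state $q$ (so every $c$-word is accepted) and leaving every $d$-transition undefined (so every $d$-word is rejected). The resulting $\A'$ still has at most $k$ states and agrees with $\A$ on $\Sigma^*$, hence contains $S_1$ and is disjoint from $S_2$; moreover $c \in L(\A')\setminus S_1$ witnesses $S_1 \subset L(\A')$ and $d \notin L(\A')\cup S_2$ witnesses $L(\A')\cup S_2 \subset \Gamma^*$, so $\A'$ strictly separates. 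Conversely, any $k$-DFA strict separator $\A'$ is in particular a separator; restricting it to the states reachable by words in $\Sigma^*$, with only $\Sigma$-transitions retained, yields a DFA of at most $k$ states recognizing $L(\A') \cap \Sigma^*$, which contains $S_1$ and is disjoint from $S_2$, and thus solves the identification instance. Since the reduction is polynomial and preserves the answer in both directions, NP-hardness follows. I would finally remark that the same gadget using only $c$ (respectively only $d$) settles the one-sided strict variants.
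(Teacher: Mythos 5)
Your proof is correct, and its hardness gadget is genuinely different from the paper's. The paper reduces from $k$-DFA-separability (Theorem~\ref{sep np}) using a \emph{single} fresh letter $0$: in the inputs, $0$ leads to the rejecting sink (hence the WLOG assumption that $L(\A_1),L(\A_2)$ are nonempty and finite), while in the constructed separator $0$ resets every state to the initial state $q_0$, so that $L(\A')=(\Sigma^*\cdot 0)^*\cdot L(\A)$; left strictness is then witnessed by $0\cdot w$ for $w\in L(\A)\neq\emptyset$, and right strictness by $w\cdot 0\cdot w$ for $w\in L(\A_2)\neq\emptyset$. You instead reduce directly from DFA identification (which amounts to the same source, since the paper's separability hardness is itself a reduction from identification) and use \emph{two} fresh letters with syntactically disjoint roles: $c$ teleporting every state to a fixed accepting state $q_a$, giving the one-letter witness $c$ for $S_1\subset L(\A')$, and $d$ left undefined, giving the witness $d$ for $L(\A')\cup S_2\subset\Gamma^*$. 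Both gadgets correctly reuse existing states, so $k$ is unchanged, and both backward directions are the same $\Sigma$-restriction argument. What each buys: your two-letter gadget makes the strictness witnesses trivially local and, as you note, specializes immediately to the one-sided strict variants by dropping one letter, whereas the paper's Theorem~\ref{hssep np} reuses its one-letter gadget for one side and obtains the other side by the duality $L(\A_1)\subset L(\A)\subseteq\comp(L(\A_2))$ iff $L(\A_2)\subseteq\comp(L(\A))\subset\comp(L(\A_1))$; the paper's gadget is more economical (one letter, no reliance on partial transition functions for the right-strictness witness, though partiality is anyway permitted by this paper's DFA definition, which your $d$-gadget exploits). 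Your hypotheses are also slightly different but properly handled: you need only $S_1\neq\emptyset$ (dispatched as a trivial case), while the paper needs both languages nonempty. For NP membership your argument is essentially the paper's: the paper shows a strictly separable pair always admits a strict separator of size polynomial in $|\A_1|,|\A_2|$ by adding one short neutral word $w\in\comp(L(\A_1)\cup L(\A_2))$ to $L(\A_1)$; you package the same construction as the explicit criterion that a regular strict separator exists iff $L(\A_1)\cap L(\A_2)=\emptyset$ and the neutral set has at least two words, and use it to cap binary $k$ --- both correctly handle the large-$k$ case.
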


\begin{proof}
We start with membership in NP. As in the proof of Theorem \ref{sep np}, a witness $k$-DFA $\A$ can be checked in polynomial time. However, if $k$ is given in binary and greater than $\indx(L(\A_1))$ and $\indx(L(\A_2))$, we cannot base a separator on $\A_1$ or $\A_2$. We fill this gap by showing that if a DFA strictly separates $\zug{\A_1,\A_2}$, then there also exists one that is polynomial in $|\A_1|$ and $|\A_2|$.

Assume that $\zug{\A_1,\A_2}$ are strictly separable, and let $T = \comp (L(\A_1) \cup L(\A_2))$. Note that  $\zug{\A_1,\A_2}$ being strictly separable implies that $|T| >1$. Let $\A_T$ be a minimal DFA for $T$. Note that $|\A_T| \leq |\A_1| \cdot |\A_2|$. Consider a word $w\in T$ that is accepted along a simple path in $\A_T$. Thus, $|w|$ is polynomial in $|\A_T|$. Consider a DFA $\A^w_1$ with $L(\A_1^w) = L(\A_1) \cup \{w\}$. Note that $|\A_1^w|$ is polynomial in $|\A_1|$ and $|w|$. It is not hard to see that $\A_1^w$ is a strict separator for $\zug{\A_1,\A_2}$. Indeed, $L(\A_1^w)$ strictly contains $L(\A_1)$, it is contained in $\comp(L(\A_2))$, and as $|T| >1$, the latter containment is strict. Hence,  $\zug{\A_1,\A_2}$ are strictly separable by a DFA that is polynomial in $|\A_1|$ and $|\A_2|$.

For NP-hardness, we describe a reduction from $k$-DFA-separability, proved to be NP-hard in Theorem~\ref{sep np}. Consider two DFAs $\A_1$ and $\A_2$ over $\Sigma$, and assume that $0 \not \in \Sigma$. Assume also that $L(\A_1),L(\A_2) \neq \emptyset$, and that $L(\A_1),L(\A_2)$ are finite, and thus have rejecting sinks. Clearly, $k$-DFA-separability is NP-hard also in this case. Let $\A'_1$ and $\A'_2$ be DFAs obtained from $\A_1$ and $\A_2$ by extending the alphabet to $\Sigma \cup \{0\}$ and adding a transition labeled $0$ from every state to the rejecting sink. Note that $L(\A'_1)=L(\A_1)$ and $\comp(L(\A'_2))=(\Sigma \cup \{0\})^* \setminus L(\A_2)$. We prove that for every $k \geq 1$, we have that $\zug{\A_1,\A_2}$ is $k$-DFA-separable iff $\zug{\A'_1,\A'_2}$ is $k$-DFA-strictly-separable.

Assume that there is a $k$-DFA $\A=\zug{\Sigma,Q,\delta,q_0,F}$ that separates $\zug{\A_1, \A_2}$. Let $\A'$ be the $k$-DFA obtained from $\A$ by extending the alphabet to $\Sigma \cup \{0\}$, and adding a transition labeled $0$ from every state to $q_0$. It is easy to see that $L(\A') = (\Sigma^* \cdot 0)^*\cdot L(\A)$, and so
$L(\A) \subseteq L(\A')$. Also, whenever $L(\A)$ is not empty, this containment is strict. Indeed, each word $w \in L(\A)$ induces the word $0 \cdot w \in L(\A') \setminus L(\A)$. Hence, as $\emptyset \neq L(\A_1) \subseteq L(\A)$, we have that $L(\A'_1) \subset L(\A')$. In addition, as $L(\A) \cap L(\A_2) = \emptyset$, then clearly $L(\A') \cap L(\A'_2) =\emptyset$. Moreover, as $L(\A_2)\neq \emptyset$, there is a word $w\in L(\A_2)$. Then, $w \not \in L(\A)$ and so $w\cdot 0 \cdot w\notin L(\A')$. In addition, $w\cdot 0 \cdot w\notin L(\A'_2)$. Thus, $L(\A')\cup L(\A'_2)\subset (\Sigma\cup\{0\})^*$, and we are done.

For the other direction, assume there is a $k$-DFA $\A'$ that strictly separates $\zug{\A'_1,\A'_2}$. Consider the $k$-DFA $\A$ obtained from $\A'$ by removing all transitions labeled $0$ and changing the alphabet to $\Sigma$. Every word in $L(\A_1)$ is also in $L(\A')$, and it does not contain $0$. So, $L(\A_1)\subseteq L(\A)$. Similarly, every word in $L(\A')$ that does not contain $0$ and is not in $L(\A'_2)$, is also not in $L(\A_2)$. Therefore, $L(\A) \cap L(\A_2)=\emptyset$.  Hence, $\A$ separates $\zug{\A_1,\A_2}$, and we are done.
\hfill \qed \end{proof}

The reduction described in the proof of Theorem~\ref{ssep np} can be used to prove NP-completeness also for {\em one-sided strict separation} problems. Formally, we have the following, which generalizes Conjecture~1 from \cite{GGS17}.

\begin{theorem}\label{hssep np}
Given DFAs $\A_1$ and $\A_2$, and a bound $k \geq 1$, the problems of deciding whether there exists a $k$-DFA $\A$ such that $L(\A_1) \subset L(\A) \subseteq \comp(L(\A_2))$ and whether there exists a $k$-DFA $\A'$ such that $L(\A_1) \subseteq L(\A') \subset \comp(L(\A_2))$ are NP-complete.
\end{theorem}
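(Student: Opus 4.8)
The plan is to observe that the single reduction built in the proof of Theorem~\ref{ssep np} already settles both one-sided variants simultaneously, so that only the NP-membership arguments have to be redone for each side. For membership, recall from Theorem~\ref{sep np} that a candidate separator with at most $k$ states can be verified in polynomial time; the only difficulty is a binary bound $k$ exceeding $\indx(L(\A_1))$, where such a DFA cannot be guessed directly. I would dispose of this by exhibiting polynomial-size canonical witnesses for each side, after first checking in polynomial time the precondition $L(\A_1) \cap L(\A_2) = \emptyset$ (without which no separator exists at all). For the left-strict problem $L(\A_1) \subset L(\A) \subseteq \comp(L(\A_2))$, I check whether $T = \comp(L(\A_1) \cup L(\A_2)) \neq \emptyset$; if so, picking a word $w \in T$ accepted along a simple path of a minimal DFA for $T$ and taking $L(\A) = L(\A_1) \cup \{w\}$, exactly as in Theorem~\ref{ssep np}, yields a left-strict separator of size polynomial in $|\A_1|,|\A_2|$. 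For the right-strict problem $L(\A_1) \subseteq L(\A') \subset \comp(L(\A_2))$ the witness is even simpler: the minimal DFA for $L(\A_1)$, of size $\indx(L(\A_1)) \le |\A_1|$, works, since $L(\A_1) \subseteq L(\A_1)$ holds and the remaining inclusion $L(\A_1) \subset \comp(L(\A_2))$ is strict precisely when $T \neq \emptyset$. Hence in both cases, for large $k$ the problem collapses to a polynomial emptiness/strict-inclusion test, while for small $k$ we guess a DFA with at most $k$ states, giving membership in NP.

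For NP-hardness I would reuse \emph{verbatim} the reduction from $k$-DFA-separability of Theorem~\ref{ssep np}: given $\A_1,\A_2$ over $\Sigma$ with $0 \notin \Sigma$ and both languages nonempty and finite, build $\A'_1,\A'_2$ over $\Sigma \cup \{0\}$ by adding to each automaton a $0$-labeled transition from every state to its rejecting sink, so that $L(\A'_i) = L(\A_i)$. The claim to prove is that $\zug{\A_1,\A_2}$ is $k$-DFA-separable iff $\zug{\A'_1,\A'_2}$ is $k$-DFA-left-strictly-separable, and likewise iff it is $k$-DFA-right-strictly-separable.

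The point that makes one reduction serve both sides is an asymmetry between the two directions of Theorem~\ref{ssep np}. In the forward direction, a separator $\A$ for $\zug{\A_1,\A_2}$ is turned into the $k$-DFA $\A'$ (adding $0$-transitions back to $q_0$), and that proof already establishes $L(\A'_1) \subset L(\A') \subset \comp(L(\A'_2))$, i.e. a \emph{doubly} strict separator; in particular $\A'$ witnesses each one-sided strict condition. In the backward direction, the construction that strips the $0$-transitions from a strict separator $\A'$ and recovers a separator $\A$ for $\zug{\A_1,\A_2}$ invokes only the non-strict inclusions $L(\A'_1) \subseteq L(\A')$ and $L(\A') \subseteq \comp(L(\A'_2))$ — strictness is never used. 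Since every left-strict (resp. right-strict) separator is in particular a non-strict separator, the same backward argument applies unchanged. Combining the two directions gives the biconditionals above for each side, reducing $k$-DFA-separability to each one-sided problem and yielding NP-hardness from Theorem~\ref{sep np}.

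The step I expect to demand the most care is the NP-membership argument for the left-strict variant under a binary bound $k$: there the small witness is not a canonical object read off $\A_1$ but the augmented automaton $\A_1^w$, and one must verify that its polynomial size, together with the simple-path bound on $|w|$, genuinely covers the large-$k$ regime (so that ``left-strictly separable'' is equivalent to the polynomial-time test ``$L(\A_1)\subseteq\comp(L(\A_2))$ and $T\neq\emptyset$'' in that regime). The hardness half, by contrast, is essentially a rereading of the proof of Theorem~\ref{ssep np} once the asymmetry above is noticed.
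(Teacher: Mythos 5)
Your proof is correct, and it rests on the same reduction as the paper's, but you package the argument differently. The paper proves only the right-strict variant ($L(\A_1) \subseteq L(\A) \subset \comp(L(\A_2))$) in full --- membership because for $k \geq \indx(L(\A_1))$ the automaton $\A_1$ itself is a witness whenever any witness exists, and hardness by re-running the reduction of Theorem~\ref{ssep np} with the left inclusion relaxed to non-strict --- and then settles the left-strict variant in a single line via complementation: $L(\A_1) \subset L(\A) \subseteq \comp(L(\A_2))$ iff $L(\A_2) \subseteq \comp(L(\A)) \subset \comp(L(\A_1))$. You instead treat both variants head-on, through the observation that the forward direction of the Theorem~\ref{ssep np} reduction already produces a \emph{doubly} strict separator while its backward direction invokes only the non-strict containments; this one asymmetry yields both biconditionals simultaneously, and for left-strict membership you recycle the polynomial-size witness $\A_1^w$ from the membership half of Theorem~\ref{ssep np}, with the preconditions you state being exactly right (left-strict separability forces $T \neq \emptyset$, since any word in $L(\A) \setminus L(\A_1)$ avoids both $L(\A_1)$ and $L(\A_2)$, so for $k$ above the polynomial bound $|\A_1^w|$ the problem collapses to the polynomial tests, and below it $k$ is polynomially bounded and a $k$-state DFA can be guessed outright). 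As for what each route buys: the paper's complementation dispatch is shorter, but it tacitly uses that complementing a $k$-DFA yields a $k$-DFA, which is immediate for complete automata yet mildly delicate under the paper's partial transition functions (completion may add a rejecting sink); your route never complements an automaton and so is insensitive to this point, at the price of one extra --- but easy --- membership argument. The step you flagged as needing the most care indeed goes through as you anticipated, and your reading of the forward/backward asymmetry matches what the paper's own proof of this theorem implicitly redoes when it repeats the forward construction with a weakened left inclusion.
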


\begin{proof}
We start with the problem of deciding whether there exists a $k$-DFA $\A$ such that $L(\A_1) \subseteq L(\A) \subset \comp(L(\A_2))$.

Membership in NP is easy, as we can verify each containment in polynomial time. Note that, as in the proof of Theorem~\ref{sep np}, if $k \geq \indx(L(\A_1))$, then $\A_1$ is a witness. Thus, we can assume that $k < \indx(L(\A_1))$, and so membership in NP applies also for the case $k$ is given in binary.

For NP-hardness, we follow the same reduction from $k$-DFA-separability described in the proof of Theorem~\ref{ssep np}, and argue it is valid also for our problem.
Assume that there is a $k$-DFA $\A=\zug{\Sigma,Q,\delta,q_0,F}$ that separates $\zug{\A_1, \A_2}$. Let $\A'$ be the $k$-DFA obtained from $\A$ by extending the alphabet to $\Sigma \cup \{0\}$, and adding a transition labeled $0$ from every state to $q_0$. It is easy to see that $L(\A') = (\Sigma^* \cdot 0)^*\cdot L(\A)$, and so $L(\A) \subseteq L(\A')$. Therefore, we have $L(\A'_1)\subseteq L(\A')$. In addition, as $L(\A) \cap L(\A_2) = \emptyset$, then clearly $L(\A') \cap L(\A'_2) =\emptyset$. Moreover, as $L(\A_2)\neq \emptyset$, there is a word $w\in L(\A_2)$. Then, $w \not \in L(\A)$ and so $w\cdot 0 \cdot w\notin L(\A')$. In addition, $w\cdot 0 \cdot w\notin L(\A'_2)$. Thus, $L(\A')\cup L(\A'_2)\subset (\Sigma\cup\{0\})^*$, and we are done.

For the other direction, assume there is a $k$-DFA $\A'$ such that  $L(\A_1) \subseteq L(\A) \subset \comp(L(\A_2))$. Consider the $k$-DFA $\A$ obtained from $\A'$ by removing all transitions labeled $0$ and changing the alphabet to $\Sigma$. Every word in $L(\A_1)$ is also in $L(\A')$, and it does not contain $0$. So, $L(\A_1)\subseteq L(\A)$. Similarly, every word in $L(\A')$ that does not contain $0$ and is not in $L(\A'_2)$, is also not in $L(\A_2)$. Therefore, $L(\A) \cap L(\A_2)=\emptyset$.  Hence, $\A$ separates $\zug{\A_1,\A_2}$, and we are done.

Now, for the problem of deciding whether there exists a $k$-DFA $\A$ such that $L(\A_1) \subset L(\A) \subseteq \comp(L(\A_2))$, note that the latter condition is equivalent to $L(\A_2) \subseteq \comp(L(\A)) \subset \comp(L(\A_1))$, which is NP-complete by the above.
\end{proof}

\subsection{Certifying Bounds on Separation}

Consider two regular languages $L_1, L_2 \subseteq \Sigma^*$ and a bound $k \geq 1$. Certifying bounds on separation, we again consider a turn-based two-player game between Prover and Refuter. This time we are interested in whether $L_1$ and $L_2$ can be separated by a $k$-DFA. Consider a word $x \oplus y \in (\Sigma' \times [k])^\omega$. We say that $x\oplus y$ {\em agrees with\/} $\zug{L_1,L_2}$ if there exists $F \subseteq [k]$ such that for every $j \geq 1$, if $w^j \in L_1$, then Prover maps $w^j$ to $F$ and if $w^j \in L_2$, then Proven does not map $w^j$ to $F$.

Accordingly, we define the language $\DFAsep(L_1,L_2,k) \subseteq (\Sigma' \times [k])^\omega$ of words with correct annotations as follows:
\[\DFAsep(L_1,L_2,k)=\{x \oplus y : x \oplus y \mbox{ is legal and agrees with }\zug{L_1,L_2} \}.\]
Then, $\NoDFAsep(L_1,L_2,k)=\comp(\DFAsep(L_1,L_2,k))$ is the language of all words with incorrect annotations.
\begin{proposition}\label{dfa:separability}
Consider two regular languages $L_1, L_2 \subseteq \Sigma^*$ and $k \geq 1$. Exactly one of the following holds:
\begin{itemize}
\item
$\zug{L_1,L_2}$ is  $k$-DFA-separable, in which case $\DFAsep(L_1,L_2,k)$ is $(\Sigma'/[k])$-realizable by the system.
\item
$\zug{L_1,L_2}$ is not $k$-DFA-separable, in which case $\NoDFAsep(L_1,L_2,k)$ is $([k]/\Sigma')$-realizable by the environment.
\end{itemize}
\end{proposition}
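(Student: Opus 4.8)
The plan is to derive the dichotomy as an instance of \Cref{determinacy} applied to $L=\DFAsep(L_1,L_2,k)$, and to fill in the labels of the two cases by proving that $\zug{L_1,L_2}$ is $k$-DFA-separable if and only if $\DFAsep(L_1,L_2,k)$ is $(\Sigma'/[k])$-realizable by the system. First I would verify that $\DFAsep(L_1,L_2,k)$ is $\omega$-regular, exactly as for $\DFAreal$: legality is monitored by remembering the finitely many transitions Prover has committed to and checking the reset after each $\#$, while agreement with $\zug{L_1,L_2}$ is a finite disjunction, over the candidate sets $F\subseteq[k]$, of the safety condition that no $w^j\in L_1$ is mapped outside $F$ and no $w^j\in L_2$ is mapped into $F$; membership of the current $w^j$ in $L_1$ and in $L_2$ is tracked by running $\R_{L_1}$ and $\R_{L_2}$, which are finite since $L_1,L_2$ are regular. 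Since $\NoDFAsep(L_1,L_2,k)=\comp(\DFAsep(L_1,L_2,k))$ by definition, \Cref{determinacy} then yields at once that exactly one of the two realizability statements holds, so it only remains to match each with (non-)separability.

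For the easy direction, suppose $\A=\zug{\Sigma,Q,q_0,\delta,F}$ is a $k$-DFA separating $\zug{L_1,L_2}$, so $L_1\subseteq L(\A)$ and $L(\A)\cap L_2=\emptyset$. Identifying $Q$ with a subset of $[k]$, I would let Prover play the transducer that restarts the simulation of $\A$ from $q_0$ after every $\#$ and otherwise advances $\A$ on Refuter's letter, always outputting the current state. Every computation is legal by construction, and it agrees with $\zug{L_1,L_2}$ with witness $F$: whenever $w^j\in L_1$ the run of $\A$ on $w^j$ ends in $F$, and whenever $w^j\in L_2$ it ends outside $F$. Hence this transducer realizes $\DFAsep(L_1,L_2,k)$.

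The harder direction is to extract a separator from a system transducer $\T$ realizing $\DFAsep(L_1,L_2,k)$, and this is the one place where separation needs more than a verbatim copy of the recognizability argument. The obstacle is twofold: $\T$ is context dependent, so the state it assigns to a word may a priori depend on what preceded the last $\#$, and agreement only constrains the states reached by words of $L_1\cup L_2$, leaving ``don't-care'' states. I would resolve both points inside a single computation. Feeding $\T$ the input that enumerates all words of $\Sigma^{\le k}$, each followed by $\#$, forces by legality a single initial state $y_1$ and a single transition function $\delta\colon[k]\times\Sigma\to[k]$ on all reachable states, since every state is reached by a word of length at most $k-1$ and every transition is thereby exercised. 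To pin the classification, I would show that the sets $U_1=\{\,q:\delta^*(y_1,u)=q\text{ for some }u\in L_1\,\}$ and $U_2=\{\,q:\delta^*(y_1,v)=q\text{ for some }v\in L_2\,\}$ are disjoint: if some $u\in L_1$ and $v\in L_2$ both reached a common state $q$, then appending $\#\,u\,\#\,v\,\#$ to the enumeration produces one legal computation in which, by consistency of $\delta$, both $u$ and $v$ are mapped to $q$, so no witness $F$ could both contain $q$ (forced by $u\in L_1$) and avoid $q$ (forced by $v\in L_2$), contradicting that $\T$ agrees with $\zug{L_1,L_2}$. Setting $F=U_1$ and classifying the don't-care states arbitrarily then gives a $k$-DFA $\A=\zug{\Sigma,[k],y_1,\delta,F}$ with $L_1\subseteq L(\A)$ and $L(\A)\cap L_2=\emptyset$, i.e.\ a separator.

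With both directions in hand the proposition follows: by \Cref{determinacy} exactly one realizability statement holds, and the equivalence just proved identifies the system case with $k$-DFA-separability; consequently, if $\zug{L_1,L_2}$ is not separable then $\DFAsep(L_1,L_2,k)$ is not system-realizable, and \Cref{determinacy} places us in the environment case, i.e.\ $\NoDFAsep(L_1,L_2,k)$ is $([k]/\Sigma')$-realizable by the environment. I expect the disjointness of $U_1$ and $U_2$ to be the crux of the argument, as it is precisely where the slack of separation—don't-care states and a one-sided rather than exact classification—must be accommodated, in contrast to the recognizability setting, where agreement forces the classification of every reached state.
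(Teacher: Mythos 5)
Your proof is correct and takes essentially the paper's route: the paper states \Cref{dfa:separability} without a separate proof, as an immediate instance of \Cref{determinacy} (exactly like \Cref{dfa:recognisability}), and your extraction of a separator from a realizing transducer---forcing commitment to a single $\delta$ by enumerating $\Sigma^{\le k}\cdot\#$ and then proving disjointness of the state sets reached via $L_1$ and via $L_2$---is precisely the adaptation to separation of the argument the paper sketches after \Cref{dfa:recognisability}, with $F=U_1$ handling the don't-care states that separation leaves unconstrained. Note only that, as in the paper's own treatment, the correspondence implicitly assumes the candidate separator has a complete transition function (Prover must answer in $[k]$ at every round), which is consistent with the paper's Myhill-Nerode-based notion of $k$-DFA-recognizability.
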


A transducer that $([k]/\Sigma')$-realizes
$\NoDFAsep(L,k)$ is termed an $(L_1,L_2,k)$-refuter, and we seek refuters that generate short certificates.
As has been the case in Section~\ref{sec cert}, such a certificate is an informative bad prefix for $\DFAsep(L_1,L_2,k)$. Formally, we define the language $\violate(L_1,L_2,k) \subseteq (\Sigma' \times [k])^*$ of words that include a violation of legality or agreement with $L_1$ and $L_2$ as follows.
\[\begin{array}{ll}
\violatesep(L_1,L_2,k) =  \{ x \oplus y : & \text{there is } j \geq 1 \text{ such that } x_j=\# \text{ and } y_{j+1} \neq y_1, \mbox{ or} \\
                & \text{there are } j_1, j_2 \geq 1 \text{ such that }\\
                 & y_{j_1}=y_{j_2}, x_{j_1}=x_{j_2}, \text{ and } y_{{j_1}+1} \neq y_{{j_2}+1},  \\
                & \mbox{or } w^{j_1} \in L_1, w^{j_2} \in L_2, \text{ and } y_{{j_1}} = y_{{j_2}}\}.

\end{array}\]

Before constructing an $(L_1,L_2,k)$-refuter that generates short certificates, we first need some notations and observations.
Let $\A=\zug{\Sigma,Q,q_0,\delta,F}$ and $\A'=\langle \Sigma,Q'$, $q_0'$, $\delta',F'\rangle$ be DFAs.
We define the set $F_{\A,\A'}$ of states of $\A$ that are reachable by traversing a word in $L(\A')$. Formally, $q \in F_{\A,\A'}$ iff there is $w \in L(\A')$ such that $\delta^*(q_0,w) = q$, where $\delta^*$ is the extension of $\delta$ to words.
Note that $F_{\A,\A'}$ does not depend on the acceptance condition of $\A$.

\begin{lemma}
\label{faa}
	For every DFAs $\A$ and $\A'$, we have that $L(\A') \subseteq L(\A)$ iff $F_{\A,\A'}\subseteq F$, and $L(\A) \cap L(\A') = \emptyset$ iff $F_{\A,\A'} \subseteq Q \setminus F$.
\end{lemma}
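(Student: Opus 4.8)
The plan is to prove both biconditionals by unfolding the definition of $F_{\A,\A'}$ and reducing each to a statement about the single state in which a word's run in $\A$ ends. The key observation is that, by determinism, every word $w$ induces exactly one run in $\A$, and $w \in L(\A)$ iff $\delta^*(q_0,w) \in F$; meanwhile $F_{\A,\A'}$ is, by definition, precisely the set of end-states $\delta^*(q_0,w)$ collected over all $w \in L(\A')$. Hence both sides of each equivalence are really assertions about the same finite set of end-states, and the proof amounts to chasing this correspondence in both directions.

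For the first equivalence, I would argue as follows. For the direction $L(\A') \subseteq L(\A) \Rightarrow F_{\A,\A'} \subseteq F$, take any $q \in F_{\A,\A'}$; by definition there is $w \in L(\A')$ with $\delta^*(q_0,w)=q$, and since $w \in L(\A') \subseteq L(\A)$ we get $q = \delta^*(q_0,w) \in F$. For the converse, assume $F_{\A,\A'} \subseteq F$ and take any $w \in L(\A')$; then $\delta^*(q_0,w) \in F_{\A,\A'} \subseteq F$, so $w \in L(\A)$, giving $L(\A') \subseteq L(\A)$. The second equivalence is entirely symmetric, with $F$ replaced by $Q \setminus F$ and ``accepting'' by ``rejecting'': $L(\A) \cap L(\A') = \emptyset$ says exactly that no $w \in L(\A')$ is accepted by $\A$, i.e. every end-state $\delta^*(q_0,w)$ for $w \in L(\A')$ lies in $Q \setminus F$, which is precisely $F_{\A,\A'} \subseteq Q \setminus F$. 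I would spell this out as the same two-directional chase.

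The one point requiring care --- the main (and only) obstacle in an otherwise routine argument --- is the partial transition function. The backward direction of the first equivalence presumes that every $w \in L(\A')$ actually has a \emph{defined} run in $\A$; otherwise such a $w$ contributes no state to $F_{\A,\A'}$ and yet $w \notin L(\A)$, which would break the inference ``$F_{\A,\A'}\subseteq F$ implies $L(\A')\subseteq L(\A)$''. In every application of this lemma the relevant automaton $\A$ is a candidate separator produced by the game, whose transition function $\delta:[k]\times\Sigma\to[k]$ is total, so all runs are defined. I would therefore carry out the proof under the standing assumption that $\A$ is complete, which is exactly the regime in which the lemma is invoked; in that regime the correspondence between words of $L(\A')$ and their end-states in $\A$ is a bijnew totality that makes both chases go through without further conditions.
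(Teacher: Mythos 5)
Your proof is correct, and its core---unfolding $F_{\A,\A'}$ into the set of end-states $\delta^*(q_0,w)$ over $w \in L(\A')$ and chasing both directions---matches the paper's treatment of the first equivalence. You diverge on the second: you argue it directly and symmetrically, whereas the paper reduces it to the first claim by dualizing the acceptance condition (letting $\tilde{\A}$ be $\A$ with accepting set $Q \setminus F$, rewriting $L(\A) \cap L(\A') = \emptyset$ as $L(\A') \subseteq \comp(L(\A)) = L(\tilde{\A})$, and invoking the observation that $F_{\tilde{\A},\A'} = F_{\A,\A'}$ because $F_{\A,\A'}$ is acceptance-independent). Both routes are equally short; yours avoids the identification $\comp(L(\A)) = L(\tilde{\A})$, which itself presumes a total transition function. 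That brings us to your completeness caveat, which is a genuine point the paper glosses over: the paper defines $\delta$ as a \emph{partial} function, and with partial $\delta$ the ``if'' direction of the first equivalence is literally false---e.g., over $\Sigma = \{a\}$, take $\A$ with a single accepting state and no transitions, and $L(\A') = \{a\}$; then $F_{\A,\A'} = \emptyset \subseteq F$ while $L(\A') \not\subseteq L(\A) = \{\epsilon\}$. The paper's own proof silently assumes $\delta^*(q_0,w)$ is defined for every $w \in L(\A')$, which is exactly the totality you adopt as a standing assumption, and as you note, in the lemma's applications the candidate separator committed by Prover has a total function $\delta : [k] \times \Sigma \rightarrow [k]$, so the assumption is harmless. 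One refinement you could add: your direct argument for the second equivalence actually survives partiality, since any word accepted by $\A$ necessarily has a defined run---only the first equivalence needs completeness.
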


\begin{proof}
	We start with the first claim. If $F_{\A,\A'}\subseteq F$, then for every word $w\in L(\A')$, we have that $\delta^*(q_0,w) \in F$, and so $w \in L(\A)$ and $L(\A')\subseteq L(\A)$. If $F_{\A,\A'}\not \subseteq F$, then there exists a word $w\in L(\A')$ such that $\delta^*(q_0,w) \in Q\setminus F$. Then, $w\in L(\A')\setminus L(\A)$, and so $L(\A')\not \subseteq L(\A)$.

	For the second claim, note that $L(\A)\cap L(\A')=\emptyset$ iff $L(\A') \subseteq \comp(L(\A))$. Let $\tilde{\A}$ be $\A$ with $Q \setminus F$ being the set of accepting states. By the first claim, we have that $L(\A') \subseteq L(\tilde{\A})$ iff $F_{\tilde{\A},\A'}\subseteq Q \setminus F$. Since $\A$ and $\tilde{\A}$ differ only in the acceptance condition, $F_{\tilde{\A},\A'}=F_{\A,\A'}$, and so we are done.
	\hfill \qed
\end{proof}

Lemma~\ref{faa} implies the following characterization of separability by a DFA with a given structure:

\begin{theorem}
	Consider DFAs $\A_1$, $\A_2$, and $\A$. Let $\A=\zug{\Sigma,Q,q_0,\delta,\emptyset}$. For a set $F\subseteq Q$, define $\A_F = \zug{\Sigma,Q,q_0,\delta,F}$. Then, $F_{\A,\A_1} \cap F_{\A,\A_2} = \emptyset$ iff there exists a set $F\subseteq Q$ such that $\A_F$ separates $\zug{\A_1,\A_2}$.
\end{theorem}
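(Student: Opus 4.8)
The plan is to prove the biconditional directly, in both directions, by invoking Lemma~\ref{faa} twice---once for each of the two separation requirements $L(\A_1) \subseteq L(\A_F)$ and $L(\A_F) \cap L(\A_2) = \emptyset$. The key observation that makes this clean is that the sets $F_{\A,\A_1}$ and $F_{\A,\A_2}$ depend only on the shared transition structure $\zug{\Sigma,Q,q_0,\delta}$ and not on any acceptance condition, as already noted after the definition of $F_{\A,\A'}$. Since $\A$, $\A_F$, and $\tilde{\A}$ (from the proof of Lemma~\ref{faa}) all share this structure, we have $F_{\A,\A_1} = F_{\A_F,\A_1}$ and $F_{\A,\A_2} = F_{\A_F,\A_2}$, so I can freely compute the reachability sets using $\A$ and then apply the lemma to $\A_F$.

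For the direction assuming $F_{\A,\A_1} \cap F_{\A,\A_2} = \emptyset$, I would simply \emph{choose} the witness $F := F_{\A,\A_1}$. Then $F_{\A_F,\A_1} = F_{\A,\A_1} = F \subseteq F$, so by the first claim of Lemma~\ref{faa} we get $L(\A_1) \subseteq L(\A_F)$. For the second requirement, the disjointness hypothesis gives $F_{\A,\A_2} \subseteq Q \setminus F_{\A,\A_1} = Q \setminus F$, so $F_{\A_F,\A_2} = F_{\A,\A_2} \subseteq Q \setminus F$, and the second claim of Lemma~\ref{faa} yields $L(\A_F) \cap L(\A_2) = \emptyset$. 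Hence $\A_F$ separates $\zug{\A_1,\A_2}$.

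For the converse, suppose some $F \subseteq Q$ makes $\A_F$ a separator, so $L(\A_1) \subseteq L(\A_F)$ and $L(\A_F) \cap L(\A_2) = \emptyset$. Applying the first claim of Lemma~\ref{faa} to $\A_F$ and $\A_1$ gives $F_{\A,\A_1} = F_{\A_F,\A_1} \subseteq F$, and applying the second claim to $\A_F$ and $\A_2$ gives $F_{\A,\A_2} = F_{\A_F,\A_2} \subseteq Q \setminus F$. These two containments immediately force $F_{\A,\A_1} \cap F_{\A,\A_2} \subseteq F \cap (Q \setminus F) = \emptyset$, completing the proof.

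I do not expect a genuine obstacle here; the entire argument is a bookkeeping exercise once Lemma~\ref{faa} is in hand. The one point requiring care---and the closest thing to a subtlety---is making the structure-independence of $F_{\A,\A'}$ explicit, so that the reachability sets computed for $\A$ are legitimately identified with those for $\A_F$ when the lemma (which is stated for a generic pair of DFAs) is applied. Stating this identification up front keeps the two directions short and avoids any appearance of circularity.
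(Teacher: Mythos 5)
Your proof is correct and follows essentially the same route as the paper's: both apply Lemma~\ref{faa} to reduce separation by $\A_F$ to the two containments $F_{\A,\A_1} \subseteq F$ and $F_{\A,\A_2} \subseteq Q \setminus F$, take $F = F_{\A,\A_1}$ as the witness in one direction, and read off disjointness in the other. The only difference is that you spell out the structure-independence identification $F_{\A,\A'} = F_{\A_F,\A'}$, which the paper handles implicitly via its remark after the definition of $F_{\A,\A'}$.
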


\begin{proof}
	By Lemma~\ref{faa}, the DFA $\A_F$ is a separator for $\zug{\A_1,\A_2}$ iff $F_{\A,\A_1} \subseteq F$ and $F_{\A,\A_2}\subseteq Q\setminus F$. If $F_{\A,\A_1}\cap F_{\A,\A_2}=\emptyset$, then $F=F_{\A,\A_1}$ satisfies both containments. In the other direction, if there exists a set $F$ that satisfies both containments, then $F_{\A,\A_1} \cap F_{\A,\A_2} = \emptyset$.
	\hfill \qed
\end{proof}

Consider a DFA $\A=\zug{\Sigma,Q,q_0,\delta,\emptyset}$. If there is no set $F$ such that $\A_F$ is a separator for $\zug{\A_1,\A_2}$, there exists a state $q\in F_{\A,\A_1}\cap F_{\A,\A_2}$. That is, there are words $w_1\in L(\A_1)$ and $w_2\in L(\A_2)$ such that $\delta^*(q_0,w_1)=\delta^*(q_0,w_2)=q$. Note that if Prover follows $\A$, then Refuter can cause the interaction to be a word in $\violatesep(L(\A_1),L(\A_2),k)$ by generating $w_1\cdot \# \cdot w_2 \cdot \#$. Indeed, then the resulting prefix cannot agree with $L(\A_1)$ and $L(\A_2)$. Accordingly, Refuter's strategy is to first force Prover to commit on the transitions of a $k$-DFA, and then to generate $w_1\cdot \# \cdot w_2 \cdot \#$, for the appropriate words $w_1$ and $w_2$. Next, we show how Refuter can force Prover to commit on the transitions of a $k$-DFA.

A legal word $w=x\oplus y$ induces a partial function $\delta_w : [k] \times \Sigma \to [k]$, where for all $j \geq 1$, we have that $y_{j+1}= \delta_w(y_j, x_j)$. Forcing Prover to commit on the transitions of a $k$-DFA amounts to generating a word $w$ for which $\delta_w$ is complete.
\begin{lemma}
\label{expose}
	For every $k \geq 1$, there is a strategy for Refuter that forces Prover to commit on the transitions of a $k$-DFA in $O(k^2\cdot |\Sigma|)$ rounds.
\end{lemma}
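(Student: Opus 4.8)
The plan is to have Refuter adaptively \emph{learn} Prover's transition function by a breadth-first exploration of the automaton that Prover is implicitly playing, using the reset letter $\#$ to send Prover back to its initial state and short \emph{access words} to revisit already-discovered states on demand. The whole point of going online (rather than reading all words of length $k$, which costs exponentially many rounds) is that Refuter only ever needs to probe states that Prover actually reaches, and there are at most $k$ of them.

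First I would set up the bookkeeping: Refuter maintains a set $R \subseteq [k]$ of already-exposed states together with, for each $q \in R$, an access word $u_q \in \Sigma^*$ such that reading $u_q$ immediately after a $\#$ drives Prover to output $q$. Prover's opening move fixes the initial state $y_1$, so I initialize $R = \{y_1\}$ with $u_{y_1} = \epsilon$ and a FIFO queue holding $y_1$. To process a dequeued state $q$, Refuter probes each letter $a \in \Sigma$ in turn: she outputs $\#$, then the letters of $u_q$ (after which legality forces Prover to return to $q$), and finally $a$; Prover's answer is the value $\delta_w(q,a)$, which is thereby committed. Whenever this answer $q' = \delta_w(q,a)$ is a state not yet in $R$, Refuter records it with access word $u_{q'} = u_q \cdot a$ and enqueues it. If at any point Prover fails to return to $y_1$ after a $\#$, or fails to reproduce a previously committed transition, the generated word already contains a legality violation and so lies in $\violatesep(L_1,L_2,k)$, meaning Refuter has won outright; hence I may assume throughout that Prover plays legally, which is exactly what makes $\delta_w$ a genuine (partial) function and the exploration well defined.

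The analysis rests on two counts. Since the states live in $[k]$, at most $k$ distinct states are ever discovered and enqueued, and breadth-first order guarantees that each access word $u_q$ has length equal to the depth of $q$ in the exploration tree, hence $|u_q| \le k-1$. Processing one state costs $|\Sigma|$ probes, each of length at most $1 + (k-1) + 1 = k+1$ (the reset, the access word, and the probing letter), so the whole exploration takes at most $k \cdot |\Sigma| \cdot (k+1) = O(k^2 \cdot |\Sigma|)$ rounds. When the queue empties, every transition out of every state that Prover can reach has been committed; since Prover only ever outputs reachable states, this is precisely the maximal possible domain of $\delta_w$, so Prover has been forced to commit on the transitions of a $k$-DFA.

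I expect the main obstacle to be keeping the access words short: a careless exploration could revisit states through ever-longer words and inflate the round count beyond quadratic. Enforcing breadth-first order and charging each access word to the first moment its target state is discovered is what pins every $|u_q|$ to at most $k-1$ and delivers the $O(k^2 \cdot |\Sigma|)$ bound. The secondary subtlety is the reliance on Prover's legality to treat the walk as deterministic, but this costs nothing, since any deviation is itself an immediate certificate.
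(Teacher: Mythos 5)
Your proposal is correct and essentially identical to the paper's own proof: the paper likewise has Refuter maintain the discovered states and transitions, reach a discovered state $q$ via a shortest access word (length at most $k-1$, which your breadth-first queue realizes), probe one unexposed letter, reset with $\#$, and repeat until closure, for the same $k\cdot(k+1)\cdot|\Sigma| = O(k^2\cdot|\Sigma|)$ round count. If anything, you are slightly more explicit than the paper on two points it leaves implicit --- that a legality violation ends the game immediately in Refuter's favor, and that committing on all transitions of the reachable part suffices since Prover can never output an unreached state.
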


\begin{proof}
	Refuter maintains a set $S\subseteq [k]$ of discovered states, and a set $\Delta \subseteq [k]\times \Sigma \times [k]$ of discovered transitions. Note that for every discovered state $q\in S$, Refuter can construct a word $w\in \Sigma^*$ that Prover maps to $q$ using transitions in $\Delta$. Initially, the sets $S$ and $\Delta$ are empty. Prover starts the interaction outputting an initial state $q_0$, and Refuter sets $S=\{q_0\}$.

	Assume that there is an undiscovered transition from one of the discovered states. That is, there exist $q\in S$ and $\sigma \in \Sigma$ such that $\zug{q,\sigma,r}\notin \Delta$ for all $r\in[k]$.  Refuter outputs $w \cdot \sigma\cdot \#$, where $w$ is a word Prover maps to $q$. Then, Prover answers with a state $q'$, and Refuter adds $q'$ to $S$, and $\zug{q,\sigma,q'}$ to $\Delta$.

	Refuter repeats the above process until $\Delta$ is complete. Each of the $k$ states has $|\Sigma|$ outgoing transitions. Refuter exposes one new transition in at most $k + 1$ rounds: A shortest word $w$ that Prover maps to $q$ has length at most $k-1$, then she outputs the letter $\sigma$, and then $\#$. Overall, the number of rounds is at most $k\cdot (k+1)\cdot |\Sigma|$, which is $O(k^2\cdot |\Sigma|)$.
	\hfill \qed
\end{proof}

\begin{lemma}
\label{quadratic}
	Let $\A=\zug{\Sigma,Q,q_0,\delta,F}$ and $\A'=\zug{\Sigma,Q',q_0',\delta',F'}$ be DFAs with $N$ and $N'$ states, respectively. For every state $q\in Q$, if there exists a word $w\in L(\A')$ such that $\delta^*(q_0,w)=q$, then there exists a word $w'\in L(\A')$ such that $\delta^*(q_0,w')=q$ and $|w'|\leq N\cdot N'$.
\end{lemma}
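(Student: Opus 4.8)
The plan is to work with the product automaton of $\A$ and $\A'$ and to invoke a shortest-path argument. Define the product automaton on state set $Q \times Q'$ with initial state $(q_0, q_0')$ and transition function $\delta^\times((p,p'),\sigma) = (\delta(p,\sigma), \delta'(p',\sigma))$, defined whenever both components are defined. A straightforward induction on word length shows that the run of the product on a word $u$ reaches the state $(\delta^*(q_0,u),\, (\delta')^*(q_0',u))$. In particular, a word $u$ satisfies both $\delta^*(q_0,u)=q$ and $u \in L(\A')$ if and only if the product run on $u$ ends in a state of the target set $T = \{(q, f') : f' \in F'\}$, that is, a state whose first component is exactly $q$ and whose second component is accepting in $\A'$.

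First I would use the hypothesis, which supplies a word $w$ with $\delta^*(q_0,w)=q$ and $w \in L(\A')$; by the correspondence above, the product run on $w$ witnesses that $T$ is reachable from $(q_0,q_0')$ in the transition graph of the product. Next, I would let $w'$ be the label of a shortest path from $(q_0, q_0')$ to $T$ in that graph. A shortest path visits no state twice, hence traverses at most $|Q \times Q'| - 1 = N\cdot N' - 1$ edges, so $|w'| \le N \cdot N' - 1 \le N \cdot N'$. Applying the correspondence once more to $w'$ gives $\delta^*(q_0, w') = q$ and $w' \in L(\A')$, which is exactly the desired conclusion.

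The argument is essentially routine; the only point to get right is the choice of the target set $T$: we must insist that the first component equal exactly $q$ (not merely that the $\A'$-component be accepting), since the statement fixes the state reached in $\A$ and not only membership in $L(\A')$. If one prefers to avoid any appeal to a graph-theoretic bound on simple paths, the same estimate follows by pumping: should $|w| > N \cdot N'$, the product run on $w$ must repeat some state $(p,p')$, and excising the infix read between the two visits yields a strictly shorter word whose product run still terminates in the same element of $T$, thereby preserving both $\delta^*(q_0,\cdot)=q$ and membership in $L(\A')$; iterating this reduction drives the length down to at most $N \cdot N'$. I expect no genuine obstacle here beyond this bookkeeping.
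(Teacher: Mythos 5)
Your proposal is correct and follows essentially the same route as the paper: both pass to the product automaton $\A \times \A'$ and extract a short word from a simple (shortest) path witnessing reachability, yielding the $N \cdot N'$ bound. Your variant of aiming at the target set $\{(q,f') : f' \in F'\}$ rather than at the specific product state reached by the given witness $w$ is an inessential difference, and your pumping remark is just the standard alternative phrasing of the same path-shortening argument.
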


\begin{proof}
	Consider the product DFA ${\cal P}=\A \times \A'$. Let $w\in L(\A')$ be such that $\delta^*(q_0,w)=q$. Let $q'= (\delta')^*(q'_0,w)$. Then, the state $\zug{q,q'}$ of ${\cal P}$ is reachable from $\zug{q_0,q'_0}$. A simple path from $\zug{q_0,q'_0}$ to $\zug{q,q'}$ induces the required word $w'$.
	\hfill \qed
\end{proof}

\begin{theorem}
\label{cert sep}
	Let $L_1,L_2\subseteq \Sigma^*$ be regular languages, and let $N_1=\indx(L_1)$ and $N_2=\indx(L_2)$. For every $k\geq 1$, if $\zug{L_1,L_2}$ is not $k$-DFA-separable, then Refuter can generate a word in $\violatesep(L_1,L_2,k)$ in $O(k^2 \cdot |\Sigma| + k\cdot N_1 + k\cdot N_2)$ rounds.
\end{theorem}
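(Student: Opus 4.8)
The plan is to combine the state-exposure lemma with the separability characterization and the short-witness lemma, in that order. First I would invoke \Cref{expose} to have Refuter force Prover to commit on the transitions of a complete $k$-DFA in $O(k^2 \cdot |\Sigma|)$ rounds. At that point, the legal word $w = x \oplus y$ generated so far induces a complete transition function $\delta_w : [k] \times \Sigma \to [k]$, and since Prover started with some initial state $q_0 = y_1$, this data defines a structure-DFA $\A = \zug{\Sigma, [k], q_0, \delta_w, \emptyset}$ (with acceptance left open, as in the theorem preceding \Cref{faa}). The crucial point is that Prover has now irrevocably committed to these transitions: any legal continuation must respect $\delta_w$.

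Next I would apply the characterization theorem: since $\zug{L_1, L_2}$ is not $k$-DFA-separable, there is no set $F \subseteq [k]$ for which $\A_F$ separates $\zug{\A_1, \A_2}$ (taking $\A_1, \A_2$ to be the residual automata $\mathcal{R}_{L_1}, \mathcal{R}_{L_2}$, of sizes $N_1, N_2$). By that theorem, this means $F_{\A, \A_1} \cap F_{\A, \A_2} \neq \emptyset$, so there is a state $q \in [k]$ reachable from $q_0$ both by some word $w_1 \in L_1$ and by some word $w_2 \in L_2$. Here I would invoke \Cref{quadratic} twice to bound the witness lengths: applied to $\A$ and $\mathcal{R}_{L_1}$ it gives a word $w_1 \in L_1$ with $\delta_w^*(q_0, w_1) = q$ and $|w_1| \leq k \cdot N_1$, and symmetrically $|w_2| \leq k \cdot N_2$. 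Refuter then outputs $w_1 \cdot \# \cdot w_2 \cdot \#$; since Prover is bound to $\delta_w$, she maps $w_1$ and $w_2$ to the same state $q$, yet $w_1 \in L_1$ and $w_2 \in L_2$, so no $F$ witnesses agreement with $\zug{L_1, L_2}$. Thus the resulting prefix lies in $\violatesep(L_1, L_2, k)$, exactly matching the third disjunct in its definition.

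Finally I would add up the rounds: $O(k^2 \cdot |\Sigma|)$ for the exposure phase, plus $|w_1| + |w_2| + 2 \leq k \cdot N_1 + k \cdot N_2 + 2$ for the witness phase, giving $O(k^2 \cdot |\Sigma| + k \cdot N_1 + k \cdot N_2)$ as claimed.

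The step I expect to require the most care is the interface between the exposure phase and the witness phase. \Cref{expose} guarantees Refuter can build, for each discovered state, a word reaching it, but I must verify that the complete $\delta_w$ obtained there is genuinely the transition function on which the characterization theorem and \Cref{quadratic} operate, and that Prover cannot evade the forthcoming violation by exploiting some as-yet-unconstrained freedom. The reset letters $\#$ interspersed during exposure do not constrain future runs (by legality they only reset to $y_1$), so replaying $w_1$ and $w_2$ after a $\#$ forces Prover back onto the committed transitions; I would state this explicitly to justify that the two words are indeed mapped to the common state $q$. One minor subtlety is ensuring the witnesses $w_1, w_2$ use only transitions already present in $\delta_w$, which holds because $\delta_w$ is complete after exposure, so every path in $\A$ is realizable by Prover.
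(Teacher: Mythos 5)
Your proposal is correct and follows essentially the same route as the paper's proof: invoke Lemma~\ref{expose} to force commitment to a complete transition function, use non-separability (via the characterization preceding Lemma~\ref{faa}) to obtain a common state $q \in F_{\A,\mathcal{R}_{L_1}} \cap F_{\A,\mathcal{R}_{L_2}}$, bound the witnesses by Lemma~\ref{quadratic}, and finish with $w_1 \cdot \# \cdot w_2 \cdot \#$ for the stated round count. Your explicit handling of the interface between the two phases (the $\#$ resets and completeness of $\delta_w$) is a careful elaboration of what the paper leaves implicit, but it is not a different argument.
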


\begin{proof}
	As described in Lemma~\ref{expose}, Refuter can force Prover to commit on a $k$-DFA $\A$ in $O(k^2 \cdot |\Sigma|)$ rounds. Since $\zug{L_1,L_2}$ is not $k$-DFA-separable, there are words $w_1\in L_1,w_2\in L_2$ such that the runs of $\A$ on $w_1$ and on $w_2$ both end in the same state. By Lemma~\ref{quadratic}, there exist such words satisfying $|w_1|\leq k\cdot N_1$ and $|w_2|\leq k\cdot N_2$. Refuter maintains a pair of such words for every $k$-DFA. After the DFA $\A$ is exposed, Refuter outputs the corresponding string $w_1 \cdot \#\cdot w_2 \cdot \#$, which has length at most $k\cdot N_1 + k \cdot N_2 + 2$. Overall, the interaction requires $O(k^2 \cdot |\Sigma| + k\cdot N_1 + k\cdot N_2)$ rounds.
	\hfill \qed
\end{proof}

Recall that when $L_2=\comp(L_1)$, separation coincides with recognizability, with $N_1=N_2=N$. Hence, the $O(N^2)$ lower bound on the length of certificates in Theorem~\ref{interactive refuters bad prefix}, applies also for $(N-1)$-DFA-separation.  Our upper bound for $(N-1)$-DFA-separation in Theorem~\ref{cert sep} includes an extra $|\Sigma|$ factor, as Refuter first forces Prover to commit on all transitions of the claimed DFA. We conjecture that Refuter can do better and force Prover to only to commit on a relevant part of the claimed DFA; namely one in which we can still point to a state $q \in F_{\A,\A_1} \cap F_{\A,\A_2}$ that is reachable via two words $w_1\in L(\A_1)$ and $w_2\in L(\A_2)$. Thus, rather than forcing Prover to commit on all $|\Sigma|$ successors of each state, Refuter forces Prover to commit only on transitions that reveal new states or reveal the required state $q$. Then, the prefix of the certificate that is generated in Lemma~\ref{expose} is only of length $O(N^2)$, making the bound tight. Note that such a lazy exposure of the claimed DFA could be of help also in implementations of algorithms for the DFA identification problem \cite{HV10}.

\section{Discussion and Directions for Future Research}

\paragraph{On the Size of Provers and Refuters.}

Our study of certification focused on the {\em length\/} of certificates.
We did not study the {\em size\/} of the transducers used by Prover and Refuter in order to generate these certificates.
A naive upper bound on the size of such transducers follows from the fact that they are winning strategies in a game played on a deterministic looping automaton for $\violatesep(L,k)$. Such an automaton has to store in its state space the set of transitions committed by Prover, and is thus exponential in $k$. The $(L,k)$-refuter we used for generating short certificates is also exponential in $k$, as it stores in its state space a mapping from the $k+1$ words in $H$ to $[k]$ (see Theorem~\ref{interactive refuters bad prefix}). On the other hand, it is easy to see that Prover can do with a transducer that is polynomial in $k$, as she can follow the transitions of $\RL$.

Interestingly, with a slight change in the setting, we can shift the burden of maintaining the set of transitions committed by Prover from Refuter to Prover. We do this by requiring Prover to reveal new states in her claimed $k$-DFA in an {\em ordered\/} manner: Prover can respond with a state $i \in [k]$ only after she has responded with states $\{1,\ldots,i-1\}$. Formally, we say that $w = x \oplus y \in (\Sigma' \times [k])^* \cup (\Sigma' \times [k])^\omega$, with $x = x_1 \cdot x_2 \cdots$ and $y = y_1 \cdot y_2 \cdots$ is \emph{ordered} iff for all $1 \leq j \leq |w|$ we have $y_{j} \leq \max\{y_l : 1 \leq l < j\} + 1$. Note that if Prover has a winning strategy in a game on $\DFAreal(L,k)$, she also has a winning strategy in a game in which $\DFAreal(L,k)$ is restricted to ordered words. In such a game, however, Refuter can make use of $\RL$ and circumvent the maintenance of subsets of transitions, whereas Prover has to maintain a mapping from the states in $\RL$ to their renaming imposed by the order condition. We leave the analysis of this setting as well as the study of trade-offs between the size of transducers and the length of the certificates to future research.

\paragraph{Infinite words.}

Our setting considers automata on finite words, and it focuses on the number of states required for recognizing a regular language. In \cite{KS21b}, we used a similar methodology for refuting the recognizability of $\omega$-regular languages by automata with limited expressive power.
For example, deterministic {\em B\"uchi\/} automata (DBAs) are less expressive than their non-deterministic counterpart, and a DBA-refuter generates certificates that a given language cannot be recognized by a DBA. Thus, the setting in \cite{KS21b} is of automata on infinite words, and it focuses on expressive power.

Unlike DFAs, which allow polynomial minimization, minimization of DBAs is NP-complete \cite{Sch10}. Combining our setting here with the one in \cite{KS21b} would enable the certification and refutation of {\em $k$-DBA-recognizability}, namely recognizability by a DBA with $k$ states. The NP-hardness of DBA minimization makes this combination very interesting. In particular, there are interesting connections between polynomial certificates and possible membership of DBA minimization in co-NP, as well as connections between size of certificates and succinctness of the different classes of automata.

\bibliography{../ok}
\bibliographystyle{plain}

\end{document}